\newtheorem{theorem}{Theorem}
\newcommand{\tp}{\mathrm{tp}}
\newcommand{\fp}{\mathrm{fp}}
\newcommand{\fn}{\mathrm{fn}}
\newcommand{\abc}{\mathrm{ABC}}
\newcommand{\mfabc}{\mathrm{MFABC}}
\newcommand{\dydx}[2]{\frac{\text{d} #1}{\text{d} #2}}
\newcommand{\pdydx}[2]{\frac{\partial #1}{\partial #2}}
\renewcommand{\eqref}[1]{Equation~(\ref{#1})}
\newcommand{\Prob}[1]{\mathbb{P}(#1)}
\newcommand{\CondProb}[2]{\mathbb{P}(#1 \mid #2)}
\newcommand{\CondE}[2]{\mathbb{E}\left[#1 \mid #2\right]}
\newcommand{\PDF}[1]{p(#1)}
\newcommand{\CondPDF}[2]{p(#1 \mid #2)}
\newcommand{\CondPDFsub}[3]{p_{#3}(#1 \mid #2)}
\newcommand{\like}[2]{\mathcal{L}(#2 ; #1)}
\newcommand{\approxlike}[2]{L(#2 ; #1)}
\newcommand{\E}[1]{\mathbb{E}\left[#1\right]}
\newcommand{\V}[1]{\text{Var}\left[#1\right]}
\newcommand{\C}[2]{\text{Cov}\left[#1,#2\right]}
\newcommand{\bvec}[1]{\mathbf{#1}}
\newcommand{\ind}[2]{\mathds{1}_{#1}\left(#2\right)}
\newcommand{\simProc}[2]{s(#1 \mid #2)}
\newcommand{\obsProc}[2]{g(#1 \mid #2)}
\newcommand{\approxsimProc}[2]{\tilde{s}(#1 \mid #2)}
\newcommand{\approxsimProct}[3]{s^{#3}(#1 \mid #2)}
\newcommand{\paramvec}{\boldsymbol{\theta}}
\newcommand{\param}{\theta}
\newcommand{\paramspace}{\boldsymbol{\Theta}}
\newcommand{\discrep}[2]{\rho( #1, #2)}
\newcommand{\approxdiscrep}[2]{\tilde{\rho}( #1, #2)}
\newcommand{\approxdiscrept}[3]{\rho^{#3}( #1, #2)}
\newcommand{\dat}{\mathcal{D}}
\newcommand{\simdat}{\mathcal{D}_s}
\newcommand{\approxsimdat}{\tilde{\mathcal{D}}_s}
\begin{document}

\title{Multifidelity multilevel Monte Carlo to accelerate approximate Bayesian parameter inference for partially observed stochastic processes}

\author[1,2]{David~J. Warne\footnote{To whom correspondence should be addressed. E-mail: david.warne@qut.edu.au}}
\author[3,4]{Thomas~P. Prescott}
\author[4]{Ruth~E. Baker}
\author[1,2]{Matthew~J.~Simpson}

\affil[1]{School of Mathematical Sciences, Queensland University of Technology, Brisbane, Queensland 4001, Australia}
\affil[2]{Centre for Data Science, Queensland University of Technology, Brisbane, Queensland 4001, Australia}
\affil[3]{The Alan Turing Institute, London, NW1 2DB, United Kingdom}
\affil[4]{Mathematical Institute, University of Oxford, Oxford, OX2 6GG, United Kingdom}

\maketitle

\begin{abstract}
	Models of stochastic processes are widely used in almost all fields of science. Theory validation, parameter estimation, and  prediction all require model calibration and statistical inference using data. However, data are almost always incomplete observations of reality. This leads to a great challenge for statistical inference because the likelihood function will be intractable for almost all partially observed stochastic processes. This renders many statistical methods, especially within a Bayesian framework, impossible to implement. Therefore, computationally expensive likelihood-free approaches are applied that replace likelihood evaluations with realisations of the model and observation process. For accurate inference, however, likelihood-free techniques may require millions of expensive stochastic simulations. 
To address this challenge, we develop a new method based on recent advances in multilevel and multifidelity methods for parameter inference using partially observed Markov processes.  Our novel approach combines the multilevel Monte Carlo telescoping summation, applied to a sequence of approximate Bayesian posterior targets, with a multifidelity rejection sampler that learns from computationally inexpensive model approximations to minimise the number of computationally expensive exact simulations required for accurate inference. We present the derivation of our new algorithm for likelihood-free Bayesian inference, discuss practical implementation details, and demonstrate substantial performance improvements. Using examples from systems biology, we demonstrate improvements of more than two orders of magnitude over standard rejection sampling techniques. Our approach is generally applicable to accelerate other sampling schemes, such as sequential Monte Carlo, to enable feasible Bayesian analysis for realistic practical applications in physics, chemistry, biology, epidemiology, ecology and economics. We provide source code implementations of our methods and demonstrations (available at \href{https://github.com/davidwarne/MLMCandMultifidelityForABC}{https://github.com/davidwarne/MLMCandMultifidelityForABC}).
\end{abstract}

\section{Introduction}

Stochastic processes are used to model complex systems in almost all fields of science and engineering. Partially observed stochastic processes result in some of the most computationally challenging problems for Bayesian inference~\cite{Meulen2017,Dellaportas2006,Golightly2006}. Given the ubiquity of stochastic processes for real applications it is essential that efficient methods are developed to enable the analysis of modern high resolution data sets without sacrificing accuracy.

An important application of stochastic processes occurs in the study of cellular processes~\cite{Elowitz2002,Raj2008}. Here, stochastic models of biochemical reaction networks often provide a more accurate description of system dynamics than deterministic models~\cite{Wilkinson2009}. 
This is largely due to intrinsic noise in the system dynamics of many biochemical processes that are significantly influenced by relatively low populations of certain chemical species~\cite{Fedoroff2002}. For example, in eukaryotic cells, molecules that regulate gene expression occur in relatively low numbers; as a result, stochastic fluctuations have a direct effect on the production rates of proteins~\cite{Elowitz2002,Blake2003,Braichenko2021}. In addition, there are other interesting phenomena that occur in biological systems that can only be captured by stochastic models, for example, self-induced stochastic resonance~\cite{Gammaitoni1998,Muratov2005,Wellens2004}, stochastic focusing~\cite{Paulsson2000}, and stochastic bi-stability~\cite{Schlogl1972,Schnakenberg1979}.

To quantify uncertainty in parameters or predictions, it is typical to consider an expectation of a function of an unknown vector of model parameters, $\paramvec \in \paramspace$, conditional on some observational data, $\dat$,
\begin{equation}
\E{f(\paramvec)} = \int_{\paramspace}  f(\paramvec)\CondPDF{\paramvec}{\dat} \,\text{d}\paramvec,
\label{eq:Ef}
\end{equation}
where $\CondPDF{\paramvec}{\dat}$ is the Bayesian posterior probability density~\cite{Gelman2014},
\begin{eqnarray}
\label{eq:bayes}
\CondPDF{\paramvec}{\dat} = \frac{\like{\dat}{\paramvec}\PDF{\paramvec}}{\PDF{\dat}}.
\end{eqnarray}
Here, $\PDF{\paramvec}$ is the \emph{a priori} probability density of unknown parameters, $\like{\dat}{\paramvec}$ is the likelihood function that maps a parameter vector $\paramvec$ to the probability of the observations $\dat$, and $\PDF{\dat}$ is a normalising constant often referred to as the evidence. The function $f(\cdot)$ can be any scalar function of the parameters $\paramvec$. For example, \eqref{eq:Ef} could represent a posterior moment, probability density or cumulative probability. Similarly $f(\cdot)$ could include a model prediction to enable equivalent expectation of the posterior predictive distribution.

Computational challenges arise due to the evidence term, given by
\begin{eqnarray}
\label{eq:evidence}
\PDF{\dat} = \int_{\paramspace} \like{\dat}{\paramvec}\PDF{\paramvec}\, \text{d}\paramvec.
\end{eqnarray}
This term is rarely tractable and the expectation in \eqref{eq:Ef} must be numerically estimated using sampling techniques, such as Markov chain Monte Carlo (MCMC)~\cite{Metroplis1953,Hastings1970} and sequential Monte Carlo (SMC)~\cite{DelMoral2006}, that only require point-wise evaluation of the likelihood function, $\like{\dat}{\paramvec}$. 

For partially observed Markov processes, point-wise evaluation of the likelihood requires the solution to the forward Kolmogorov equation, which must be computed approximately. Therefore, standard Bayesian tools cannot be applied and likelihood-free methods are needed, such as approximate Bayesian computation (ABC)~\cite{Sisson2018,Sunnaker2013,Warne2019a}, pseudo-marginal methods~\cite{Andrieu2009,Andrieu2010,Warne2020}, and Bayesian synthetic likelihood (BSL)~\cite{An2019,Ong2017,Price2017}. In the machine learning literature, there are a multitude of likelihood-free, or simulation-based, approaches available~\cite{Cranmer2020}, for example the use of deep neural networks to learn a surrogate model of the posterior~\cite{Lueckmann2017} or the likelihood~\cite{Papamakarios2019}. Regardless of the likelihood-free approach, realisations of the stochastic model are  generated by stochastic simulation in place of likelihood function evaluation. Thus, the computational cost of evaluating \eqref{eq:Ef} depends on the efficiency of the stochastic simulation algorithm and the posterior sampler used as a basis for likelihood-free inference. For example, ABC can be implemented using either rejection sampling~\cite{Pritchard1999,Tavare1997,Beaumont2002}, MCMC~\cite{Marjoram2003} or SMC approaches~\cite{Sisson2007,DelMoral2012,Drovandi2011}. However, if the model is even moderately expensive then such techniques are still infeasible.

Recently, there has been substantial research activity in the application of approximate model simulations and posterior samplers in combination with bias correction adjustments to accelerate likelihood-free applications that would be impractical otherwise.  Techniques such as transport maps~\cite{Parno2018} and moment-matching transforms~\cite{Warne2019b} aim to transform a set of approximate posterior samples, using a surrogate or reduced model, into posterior samples under an expensive exact model. Other approaches such as preconditioning utilise approximate models to inform efficient proposal mechanisms~\cite{Warne2019b}. Various early-rejection and delayed-acceptance schemes~\cite{Banterle2019,Everitt2020,Golightly2014,Prangle2014} probabilistically simulate the accurate model based on the rejection/acceptance status of an approximation; this family of methods is generalised for ABC schemes under the multifidelity framework~\cite{Prescott2020,Prescott2021,Peherstorfer2016}. 

There has also been substantial research in the last decade on the use of control variates to improve the rate of convergence in \emph{mean-square} of Monte Carlo estimates to expectations. In particular, the multilevel Monte Carlo (MLMC) method~\cite{Heinrich1998,Giles2015} expands an expectation as a telescoping summation of bias corrections and achieves variance reductions by exploiting path-wise convergence properties of numerical schemes for solving stochastic differential equations (SDEs)~\cite{Giles2008} or discrete-state Markov processes~\cite{Anderson2012,Lester2014,Lester2015}. Recently, the MLMC telescoping summation approach has also been successfully applied to Bayesian inference~\cite{Dodwall2015,Dodwell2019,Efendiev2015,Beskos2016,Latz2018}, as per \eqref{eq:Ef}, including ABC-based samplers~\cite{Guha2017,Jasra2019,Warne2018}. 

Many of the above approaches use approximate simulations in ways that are not always mutually exclusive. However, it is an open question as to how these various schemes could be combined to obtain compounding effects in performance.   

\subsection{Contribution}
In this work, we derive and demonstrate an effective method for approximate Bayesian inference that combines the benefits of multifidelity ABC sampling with that of variance reduction with MLMC applied to the expectations with respect to an ABC posterior density. Acknowledging the wide range of variations in these algorithm classes, we specifically consider a natural application of these approaches to rejection samplers and extend the methods of Prescott and Baker~\cite{Prescott2020}, and Warne et al.~\cite{Warne2018}. Specifically, we apply the MLMC telescoping summation of Warne et al.~\cite{Warne2018} to the multifidelity ABC sampler of Prescott and Baker~\cite{Prescott2020} including practical implementation of an adaptive tuning scheme for the multifidelity approach~\cite{Prescott2021b}. These methods combined achieve multiple orders of magnitude improvement over direct ABC rejection sampling. In addition, for situations where rejection sampling is practically intractable, our methodology is equally extendable to other sampling approaches, such as SMC, using developments by Prescott and Baker~\cite{Prescott2021} and Jasra et al.~\cite{Jasra2019}. Using biochemical reaction networks as a characteristic application area, we present key computational features of ABC methods and introduce the fundamental algorithms developed by Prescott and Baker~\cite{Prescott2020} and Warne et al.~\cite{Warne2018}. We then highlight the distinct features of the inference problem exploited by each approach to accelerate ABC rejection sampling, and derive our novel approach by leveraging the complementary nature of these features. We then explore various performance results using stochastic models of Michaelis--Menten enzyme kinetics and the repressilator gene regulatory network. These examples enable us to establish requirements for practical application of our approach and provide rules-of-thumb to simplify the tuning of algorithm parameters. Finally, we demonstrate the efficacy of our multifidelity MLMC approach using a challenging model of a two--step MAPK cascade reaction, which forms the basis for many real biological functions such as cell-to-cell signalling through the epithelial growth factor receptor (EGFR)~\cite{Dhananjaneyulu2012,Brown2004}. Our results demonstrate the potential of multilevel and multifidelity methods to accelerate ABC sampling by several orders of magnitude.

\section{Methods}

In this section we describe standard numerical methods for simulation and inference of stochastic biochemical network models~\cite{Warne2019a,Schnoerr2017,Higham2008}. Then we present recent developments in MLMC~\cite{Warne2019a,Warne2018} and multifidelity~\cite{Prescott2020} methods for inference, and develop novel extensions that achieve the benefits from both the MLMC and multifidelity approaches. Finally, we highlight practical challenges and provide guidelines for the application and tuning of our new method.

\subsection{Stochastic models of biochemical reaction networks}
We consider biochemical reaction network models that involve a well-mixed population of $\mathcal{N}$ chemical species, $\mathcal{X}_1,\mathcal{X}_2,\ldots,\mathcal{X}_\mathcal{N}$, that react via a network of $\mathcal{M}$ chemical reactions,
\begin{eqnarray*}
	\sum_{i=1}^\mathcal{N} \nu_{i,j}^- \mathcal{X}_i  \to \sum_{i=1}^\mathcal{N} \nu_{i,j}^+ \mathcal{X}_i, \quad j = 1,2,\ldots, \mathcal{M},
\end{eqnarray*}
where $\nu_{i,j}^-$ and $\nu_{i,j}^+$ are, respectively, the reactant and product stoichiometries for species $\mathcal{X}_i$ in reaction $j$. Given the state of the system at time $t$, $\bvec{X}_t = [X_{1,t},X_{2,t},\ldots,X_{\mathcal{N},t}]^\text{T}$  with $X_{i,t}$ denoting the copy number of species $\mathcal{X}_{i}$ at time $t$, it can be shown that, for a sufficiently small time interval $[t,t+\Delta t)$, the probability of reaction $j$ occurring within this interval is proportional to $ a_j(\bvec{X}_t)\Delta t$ where $a_j(\bvec{X}_t)$ is the propensity function of reaction $j$. Generally, the propensity function will take the form
\begin{eqnarray*}
	a_j(\bvec{X}_t,\paramvec) = k_j \prod_{i=1}^\mathcal{N} \nu_{i,j}^- ! \binom{X_{i,t}}{\nu_{i,j}^-},
\end{eqnarray*} 
where $k_j$ is the non-dimensionalised kinetic rate parameter for reaction $j$. However, other more complex forms are also possible that may depend on a general parameter vector such as $\paramvec = [k_1,k_2, \ldots,k_{\mathcal{M}},\boldsymbol{\lambda}]$ where $\boldsymbol{\lambda}$ could include other parameters like hill constants, observation error, or domain sizes.   

Mathematically, the stochastic dynamics of a biochemical network is governed by a discrete-state, continuous-time Markov process~\cite{Higham2008,Gillespie1977,Kurtz1972,Erban2007}. The transitional density function of this process, $\CondPDF{\bvec{X}_t}{\bvec{X}_s,\paramvec}$, describes the probability of the system state at time $t$ given the state at a previous time $s$. To obtain $\CondPDF{\bvec{X}_t}{\bvec{X}_s,\paramvec}$ one needs to solve the forward Kolmogorov equation, also referred to as the chemical master equation (CME)~\cite{Higham2008,Gillespie1992}, 
\begin{eqnarray}
\dydx{\CondPDF{\bvec{X}_t}{\bvec{X}_s},\paramvec}{t} = \sum_{j=1}^{\mathcal{M}} a_j(\bvec{X}_t -\boldsymbol{\nu}_{j};\paramvec) \CondPDF{\bvec{X}_t - \boldsymbol{\nu}_{j}}{\bvec{X}_s,\paramvec} - a_j(\bvec{X}_t ;\paramvec) \CondPDF{\bvec{X}_t}{\bvec{X}_s,\paramvec},
\label{eq:genCME}
\end{eqnarray} 
where $\bvec{\boldsymbol{\nu}}_{j}$ is the state change corresponding to the occurrence of reaction $j$, that is $\bvec{\boldsymbol{\nu}}_{j} = [\nu^+_{1,j} - \nu^-_{1,j}, \nu^+_{2,j} - \nu^-_{2,j}, \ldots, \nu^+_{\mathcal{N},j} - \nu^-_{\mathcal{N},j}]^\text{T}$. While we focus on Markov processes in this work, equivalent concepts exist for non-Markovian systems; though with more complexity~\cite{Braichenko2021}. Our new methods may be generalised to non-Markovian models without substantial modification.

\subsubsection{Stochastic simulation methods}
The CME is intractable for all but the simplest of models~\cite{Warne2019a,Schnoerr2017,Higham2008}. As a result, exact stochastic simulation~\cite{Gillespie1977,Gibson2000,Anderson2007} is required to study the system behaviour without introducing potentially substantial bias~\cite{Warne2019b}. Stochastic simulation schemes generate realisations of the network state, $\bvec{X}_t$, from some initial time $t_0$ up to a termination time $T$. Exact stochastic simulation schemes, such as Gillespie's direct method~\cite{Gillespie1977} (Algorithm~\ref{alg:essa}), simulate every reaction event, and are  computationally prohibitive for systems with very large copy numbers or very high reaction rates. 

\begin{algorithm}
	\caption{Gillespie's direct method for stochastic simulation}
	\begin{algorithmic}[1]
		\State Initialise $t = t_0$ and $\bvec{X} = \bvec{x}_0$;
		\Loop
		\State Set $a_0 \leftarrow \sum_{j=1}^\mathcal{M} a_j(\bvec{X})$;
		\State Sample next reaction time $\Delta t \sim \text{Exp}(a_0)$;
		\If{$t + \Delta t > T$}
		\State \Return
		\Else
		\State Select reaction $j \in [1,2,\ldots,\mathcal{M}]$ with probabilities
		\begin{displaymath}
		\Prob{j = 1} = a_1(\bvec{X})/a_0,\quad \Prob{j = 2} = a_2(\bvec{X})/a_0,\quad \ldots,\quad \Prob{j = \mathcal{M}} = a_\mathcal{M}(\bvec{X})/a_0;
		\end{displaymath}
		\State Set $\bvec{X} \leftarrow \bvec{X} + \boldsymbol{\nu}_j$ and $t \leftarrow t + \Delta t$.
		\EndIf
		\EndLoop
	\end{algorithmic}
	\label{alg:essa}
\end{algorithm}
Various approximate stochastic simulation schemes, such as the tau-leaping method~\cite{Gillespie2001} (Algorithm~\ref{alg:assa}), can be applied~\cite{Higham2008,Gillespie2001,Gillespie2000,Tian,Cao2004} to improve the computational performance, but there will be bias  incurred due to the simplifying approximations~\cite{Warne2019a,Anderson2011}. For example, assume the propensities do not change substantially over a time interval of length $\tau$. The resulting sample path $\bvec{Z}_t$ will be a discrete-time Markov chain approximation to a true path $\bvec{X}_t$ from the full continuous-time process. 
\begin{algorithm}
	\caption{The tau-leaping method for approximate stochastic simulation}
	\begin{algorithmic}[1]
		\State Initialise $t = t_0$, and $\bvec{Z} = \bvec{x}_0$;
		\While{$t + \tau > T$}
		\State Generate event counts, $Y_j \sim \text{Po}(a_j(\bvec{Z})\tau)$, for $j = 1,2,\ldots, \mathcal{M}$;
		\State Set $\bvec{Z} \leftarrow \bvec{Z} + \sum_{j=1}^\mathcal{M} Y_j \boldsymbol{\nu}_j$, and $t \leftarrow t + \tau$.
		\EndWhile
	\end{algorithmic}
	\label{alg:assa}
\end{algorithm}

\FloatBarrier

\subsubsection{Acceleration using multilevel Monte Carlo}
Often the goal of stochastic simulation is to estimate the expectation, $\E{f(\bvec{X}_T)}$, where $f(\cdot)$ is a function of the process state at time $t = T > 0$. Note that this expectation can, through the specification of $f(\cdot)$, resolve to any raw or central moment, it may also be used to estimate the full probability density or cumulative distribution of the system state $\bvec{X}_T$~\cite{Warne2018,Wilson2016,Giles2015b}. MLMC provides a mechanism to exploit approximations that computationally accelerate the estimation~\cite{Warne2019a,Giles2008,Lester2014}. Assume we have a stochastic process $\{\bvec{X}_t\}_{t\geq0}$, such as biochemical reaction network model, that is computationally expensive to simulate with the Gillespie direct method (Algorithm~\ref{alg:essa}) or equivalent. Now consider a sequence of $L$ stochastic processes, $\{\{\bvec{Z}_{t,\ell}\}_{t\geq0}\}_{\ell = 1}^{\ell = L}$, that approximate  $\{\bvec{X}_t\}_{t\geq0}$. This sequence is constructed such that the bias decreases and computational cost increases with $\ell$. For example, $\{\bvec{Z}_{t,\ell}\}_{t\geq0}$ could be a tau-leaping approximation with time step $\tau_\ell = c 2^{-\ell}$ and $c$ constant for all $\ell = 1,2,\ldots, L$. The insight of Giles~\cite{Giles2008} was to expand the desired expectation as a telescoping summation by exploiting linearity of expectation,    
\begin{equation}
\E{f(\bvec{X}_T)} = \E{f(\bvec{Z}_{T,1})} + \E{f(\bvec{X}_T) - f(\bvec{Z}_{T,L})} + \sum_{\ell=2}^L \E{f(\bvec{Z}_{T,\ell}) - f(\bvec{Z}_{T,\ell-1})},
\label{eq:mlts}
\end{equation}
where each of the difference terms acts to correct for the bias of the initial biased approximation, $ \E{f(\bvec{Z}_{T,1})}$. Giles~\cite{Giles2008} demonstrated that substantial computational improvements can be obtained if positive correlations can be induced between the terms in the bias corrections, that is between $\bvec{Z}_{T,\ell}$ and $\bvec{Z}_{T,\ell-1}$ for $\ell = 2, \ldots, L$, and between $\bvec{X}_T$ and $\bvec{Z}_{T,L}$. For SDEs and discrete-state continuous-time Markov processes, coupling schemes that induce sufficiently strong positive correlations have been well studied~\cite{Warne2019a,Giles2015,Anderson2012,Lester2015} and performance improvements of many orders of magnitude can be obtained without any loss in accuracy in terms of mean-square error. 

\subsection{Approximate Bayesian computation}

For the purposes of inference, the likelihood function for sequences of $n$ observations, denoted by\\  $\bvec{Y}_{\text{obs}}~=~[\bvec{y}_{\text{obs}}(t_1),\bvec{y}_{\text{obs}}(t_2), \ldots,\bvec{y}_{\text{obs}}(t_n)]$, is given by
\begin{align}
\CondPDF{\bvec{Y}_{\text{obs}}}{\paramvec}\, &= \int_{\mathbb{X}^{n+1}} \CondPDF{\bvec{Y}_{\text{obs}},\bvec{X}_{t_0},\ldots,\bvec{X}_{t_n}}{\paramvec}\prod_{i=0}^n \text{d} \bvec{x}_{t_i} \notag\\ &=\int_{\mathbb{X}^{n+1}} \PDF{\bvec{X}_{t_0}}\text{d}\bvec{X}_{t_0}\prod_{i=1}^{n} \obsProc{\bvec{y}_{\text{obs}}(t_i)}{\bvec{X}_{t_{i}},\paramvec}
	\CondPDF{\bvec{X}_{t_i}}{\bvec{X}_{t_{i-1}},\paramvec} \,\text{d}\bvec{X}_{t_i}, 
\label{eq:like} 
\end{align} 
where $\obsProc{\bvec{y}_{\text{obs}}(t_i)}{\bvec{X}_{t_{i}},\paramvec}$ is the probability density of noisy observation $\bvec{y}_{\text{obs}}(t_i)$, given the true state, $\bvec{X}_{t_i} \in \mathbb{X} \subseteq \mathbb{N}^\mathcal{N}$, at time $t_i$,  and $\CondPDF{\bvec{X}_{t_i}}{\bvec{X}_{t_{i-1}},\paramvec}$ is the solution to the CME for the state transition over the time interval $(t_{i-1},t_i]$. The intractability of the CME immediately implies the intractability of the Bayesian inference problem. However, depending on the structure of the observation process, \eqref{eq:like} may still be intractable even when \eqref{eq:genCME} has an analytical solution. Therefore, almost any practical application of partially observed continuous-time Markov processes will require some form of likelihood-free inference.

ABC, pseudo-marginal methods and BSL are all possible approaches to avoid the dependence of Bayesian inference on the likelihood (\eqref{eq:like}) by generating simulated data from the likelihood using exact stochastic simulation along with the observation noise process. Here, we focus on the ABC approach that is based on the posterior approximation,
\begin{equation}
\CondPDF{\paramvec}{\bvec{Y}_{\text{obs}}} \approx \CondPDF{\paramvec}{\discrep{\bvec{Y}_{\text{obs}}}{\bvec{Y}_{s}} \leq \epsilon} \propto \CondProb{\discrep{\bvec{Y}_{\text{obs}}}{\bvec{Y}_{s}} \leq \epsilon}{\paramvec}\PDF{\paramvec},
\label{eq:approxBayes}
\end{equation}
where $\discrep{\bvec{Y}_{\text{obs}}}{\bvec{Y}_{s}}$ is a discrepancy metric between noisy observations, $\bvec{Y}_{\text{obs}}$, and simulated noisy observations, $\bvec{Y}_{s}$, and $\epsilon$ is a sufficiently small discrepancy threshold. The most direct method to implement ABC is to use rejection sampling (Algorithm~\ref{alg:ABC-rej}). 
\begin{algorithm}
	\caption{ABC rejection sampling to generate $N$ approximated posteriors samples}
	\begin{algorithmic}[1]
		\For{$i \in [1,2, \ldots, N]$}
		\Repeat
		\State Sample the prior $\paramvec^* \sim \PDF{\paramvec}$;
		\State Generate simulated data $\bvec{Y}_s \sim \simProc{\cdot}{\paramvec^*}$;
		\Until{$\discrep{\bvec{Y}_{\text{obs}}}{\bvec{Y}_s} \leq \epsilon$}
		\State$\paramvec^i  \leftarrow \paramvec^*$.
		\EndFor
	\end{algorithmic}
	\label{alg:ABC-rej}
\end{algorithm}
Here, independent identically distributed samples of rate parameters are drawn from the prior, and the sample is accepted if a resulting stochastic simulation of the model is within $\epsilon$ of the observations under the discrepancy metric. In practice, rejection sampling may not be computationally feasible due to the prohibitively high rejection rates when the data dimensionality is high. There are many techniques that can be applied to improve the computational efficiency of ABC methods that we do not discuss here due to the wealth of available literature~\cite{Sisson2018,Sunnaker2013,Marjoram2003,Sisson2007,Drovandi2011,Fearnhead2012}. Instead we specifically investigate recent developments in MLMC and multifidelity methods~\cite{Prescott2020,Warne2018}, as they provide a new avenue to explore computational improvements for ABC inference. While we focus on these methods as applied directly to ABC rejection sampling (Algorithm~\ref{alg:ABC-rej}), we note that our work is also applicable to other schemes such as SMC~\cite{Prescott2021,Jasra2019}.

\subsection{Multilevel Monte Carlo and multifidelity methods for ABC inference}
In this section, we describe two recent methods for acceleration of ABC inference that are foundational to the main contribution of this work. These methods are presented by Warne et al.~\cite{Warne2018} and Prescott and Baker~\cite{Prescott2020}, respectively. Both methods build upon the MLMC work of Giles~\cite{Giles2008} and Rhee and Glynn~\cite{Rhee2015} for efficient computation of expectations with respect to a stochastic process~\cite{Giles2008,Anderson2012}. We refer the reader to   Warne et al.~\cite{Warne2019a}, Lester et al.~\cite{Lester2014}, Schnoerr et al.~\cite{Schnoerr2017}, and Peherstorfer et al. \cite{Peherstorfer_2018}, for accessible introductions to simulation and inference methods including MLMC and multifidelity methods.

\subsubsection{Multilevel rejection sampling}

For ABC inference, Warne et~al.~\cite{Warne2019a,Warne2018} consider the direct application of the telescoping summation (\eqref{eq:mlts}) to a sequence of $L$ ABC rejection samplers (Algorithm~\ref{alg:ABC-rej}). That is, 
\begin{equation}
\E{f(\paramvec_L)} = \E{f(\paramvec_1)} + \sum_{\ell=2}^L \E{f(\paramvec_\ell) - f(\paramvec_{\ell-1})},
\label{eq:mlmcabc}
\end{equation}
with $\paramvec_\ell \sim \CondPDF{\cdot}{\discrep{\bvec{Y}_{\text{obs}}}{\bvec{Y}_s} \leq \epsilon_\ell}$  for threshold $\epsilon_\ell = \epsilon_0 m^{-\ell}$ for $\ell = 1,\ldots, L$, where $\epsilon_0$ is a large discrepancy threshold leading to a high acceptance rate (typically close to the prior) and $m > 1$. The greatest challenge in application of MLMC for inference is the construction of a coupling to generate positively correlated sample pairs $(\paramvec_\ell,\paramvec_{\ell-1})$ without introducing additional bias that would violate the telescoping summation. For this Warne et~al.~\cite{Warne2018} apply a novel construction using the marginal empirical distribution at level $\ell$ and the inverse marginal distributions obtained from the bias corrections up to level $\ell-1$. That is, given $N_\ell$ i.i.d. samples $\paramvec_\ell^1,\ldots,\paramvec_\ell^{N_\ell}$, one generates 
\begin{equation}
\tilde{\paramvec}_{\ell-1}^{i} = [\hat{F}_{\ell-1,1}^{-1}(\bar{F}_{\ell,1}(\param_{\ell,1})),\ldots,\hat{F}_{\ell-1,k}^{-1}(\bar{F}_{\ell,k}(\param_{\ell,k}))], \quad \text{ for } i = 1,\ldots, N_{\ell},
\label{eq:mlmcabccouple}
\end{equation}
where $\bar{F}_{\ell,j}(\cdot)$ is the marginal empirical distribution of the $j$th dimension of $\paramvec_\ell$, denoted by $\param_{\ell,j}$,  and $\hat{F}_{\ell-1,j}^{-1}(\cdot)$ is an estimate of the marginal distribution inverse for the $j$th dimension of $\paramvec_{\ell-1}$. The result is a sampling procedure for $(\paramvec_\ell,\tilde{\paramvec}_{\ell})$ with a positive correlation induced between pairs. 
Warne et al.~\cite{Warne2018} note that this approach only strictly satisfies the telescoping summation in each marginal rather than the full distribution of $\paramvec_L$. As a result, the sequence of discrepancies, $\epsilon_1,\ldots,\epsilon_L$, needs to be chosen so that the correlation structures between $\CondPDF{\cdot}{\discrep{\bvec{Y}_{\text{obs}}}{\bvec{Y}_s} \leq \epsilon_\ell}$  and $\CondPDF{\cdot}{\discrep{\bvec{Y}_{\text{obs}}}{\bvec{Y}_s} \leq \epsilon_{\ell-1}}$ are similar. This requires the bias correction terms in \eqref{eq:mlmcabc} to be computed in order from $\ell = 2 \to L$ due to the dependence on the previous level for the estimation of the marginal distribution inverses at level $\ell-1$ (\eqref{eq:mlmcabccouple}). 
The complete process is given in Algorithm~\ref{alg:mlmcabc}. Given an appropriate discrepancy sequence, the sample size sequence, $N_1, \ldots, N_L$, can be optimised to achieve improved convergence rates in mean squared error (see Section 6.5.1 in Warne~\cite{Warne2020b}, Theorem 3.1 in Giles~\cite{Giles2008}, and Section 3.2 in Lester et al.~\cite{Lester2014}). However, the target accuracy needs to be sufficiently small for this improved convergence rate to take effect due to an overhead computational cost in generating the trial simulations needed to optimise the sequence $N_1, \ldots, N_L$. 
\begin{algorithm}
	\caption{Multilevel Monte Carlo for ABC rejection sampling (MLMC-ABC)}
	\begin{algorithmic}[1]
		\State Initialise $\epsilon_1,\ldots, \epsilon_L$, $N_1,\dots, N_L$ and prior $\PDF{\paramvec}$.
		%\State Set $\PDF{\paramvec_{\epsilon_0}} \leftarrow \PDF{\paramvec}$.
		\For{$\ell = 1,\ldots, L$}
		\State{Sample $\paramvec_{\ell}^{1},\ldots,\paramvec_{\ell}^{N_\ell} \sim \CondPDF{\cdot}{\discrep{\bvec{Y}_{\text{obs}}}{\bvec{Y}_s} \leq \epsilon_\ell}$ using ABC rejection sampling};
		%\EndFor
		%\For{}
		\State Set $\bar{F}_{\ell,j}(s) \leftarrow \sum_{i=1}^{N_\ell} \ind{(-\infty,s]}{\param_{\ell,j}}/N_\ell$ for $j = 1,\ldots,k$;
		%\EndFor
		%		\EndFor
		\If{$\ell = 1$}
		\State $\hat{f}_{\ell} \leftarrow \sum_{i=1}^{N_\ell} f(\paramvec^{i}_{\ell})/N_\ell$;
		\Else
		\For{$i = 1, \ldots, N_\ell$}
		
		%\For{$j = 1,\ldots, k$}
		\State Set $\tilde{\paramvec}_{{\ell-1}}^{i} \leftarrow \left[\hat{F}_{{\ell-1},1}^{-1}\left(\bar{F}_{\ell,1}^{N_\ell}\left(\param_{\ell,1}^{i}\right)\right), \ldots, \hat{F}_{{\ell-1},k}^{-1}\left(\bar{F}_{\ell,k}^{N_\ell}\left(\param_{\ell,k}^{i}\right)\right)\right]$;
		%\EndFor
		\EndFor
		%\For{$j = 1, \ldots, k$}
		%\State Set $\hat{Y}^{N_\ell}_{\epsilon_\ell,j}(s) \leftarrow \sum_{i=1}^{N_\ell} \left[\xi\left((s_j - \param_{\epsilon_\ell,j}^{i})/\delta_j\right) - \xi\left((s_j - \param_{\epsilon_{\ell-1},j}^{i})/\delta_j\right)\right]/N_\ell$.
		\State Set $\hat{F}_{\ell,j}(s) \leftarrow \hat{F}_{{\ell-1},j}(s)+ \sum_{i=1}^{N_\ell} \left[\ind{(-\infty,s]}{\param_{\ell,j}^i} -\ind{(-\infty,s]}{\tilde{\param}_{\ell-1,j}^i}\right]/N_\ell$, $j = 1, \ldots, k$;
		%\EndFor
		\State Set $\hat{f}_{\ell} \leftarrow \hat{f}_{{\ell-1}} + \sum_{i=1}^{N_\ell} \left[f(\paramvec^{i}_{\ell}) - f(\tilde{\paramvec}^{i}_{{\ell-1}})\right]/N_\ell$.
		\EndIf
		\EndFor
	\end{algorithmic}
	\label{alg:mlmcabc}
\end{algorithm}

\FloatBarrier
\subsubsection{Multifidelity rejection sampling}
%\todo{For Rev 2 need to comment on bias ... and emphasise asymptotic unbaised-ness of the method}
An alternative approach developed by Prescott and Baker~\cite{Prescott2020} utilises the telescoping summation in a probabilistic manner akin to the de-biasing approach of Rhee and Glynn~\cite{Rhee2015}. Instead of considering a sequence of ABC samplers defined in terms of acceptance thresholds, they consider ABC rejection samples with different simulator fidelities. That is, a high fidelity simulator $\bvec{Y}_s \sim \simProc{\cdot}{\paramvec}$ that is computationally expensive, such as the Gillespie direct method, and a computationally cheaper low fidelity simulator $\tilde{\bvec{Y}}_s \sim \approxsimProc{\cdot}{\paramvec}$, such as tau-leaping approximation with time step $\tau$. The idea is to perform ABC rejection sampling with the low fidelity simulator with discrepancy $\approxdiscrep{\bvec{Y}_{\text{obs}}}{\tilde{\bvec{Y}}_s}$ and acceptance threshold $\tilde{\epsilon}$, and then perform a probabilistic bias correction that requires ABC rejection using a high fidelity simulation with discrepancy $\discrep{\bvec{Y}_{\text{obs}}}{\bvec{Y}_s}$ and acceptance threshold $\epsilon$. The resulting estimator is given by
\begin{equation}
\E{f(\paramvec)} \approx \frac{\sum_{i=1}^N w(\paramvec^i)f(\paramvec^i)}{\sum_{i=1}^N w(\paramvec^i)},
\label{eq:mfabc}
\end{equation}  
where $\paramvec^1,\ldots, \paramvec^N$ are samples from the prior and the weight function is
\begin{equation}
w(\paramvec) = \ind{(0,\tilde{\epsilon}]}{\approxdiscrep{\bvec{Y}_{\text{obs}}}{\tilde{\bvec{Y}}_s}} + \xi \left[\ind{(0,\epsilon]}{\discrep{\bvec{Y}_{\text{obs}}}{\bvec{Y}_s}} - \ind{(0,\tilde{\epsilon}]}{\approxdiscrep{\bvec{Y}_{\text{obs}}}{\tilde{\bvec{Y}}_s}}\right].
\end{equation}
Here, $\xi$ is a random variable given by  
\begin{equation}
\xi = \frac{\ind{(0,\eta(\tilde{\bvec{Y}}_s)]}{U}}{\eta(\tilde{\bvec{Y}}_s)},
\end{equation}
where $U \sim \mathcal{U}(0,1)$ and $\eta(\tilde{\bvec{Y}}_s)$ is the probability of generating a high-fidelity simulation given a realisation from the low-fidelity simulation. This so-called continuation probability can take many forms, however, the method originally proposed by Prescott and Baker~\cite{Prescott2020} is
\begin{equation}
\eta(\tilde{\bvec{Y}}_s) = \eta_1\ind{(0,\tilde{\epsilon}]}{\approxdiscrep{\bvec{Y}_{\text{obs}}}{\tilde{\bvec{Y}}_s}} + \eta_2\ind{(\tilde{\epsilon},\infty)}{\approxdiscrep{\bvec{Y}_{\text{obs}}}{\tilde{\bvec{Y}}_s}},
\end{equation}   
where $\eta_1$ and $\eta_2$ are, respectively, the continuation probabilities when the low-fidelity simulation, $\tilde{\bvec{Y}_s} \sim \approxsimProc{\cdot}{\paramvec}$, is accepted and rejected. The sampler proceeds according to Algorithm~\ref{alg:mfabc}.
\begin{algorithm}[h]
	\caption{Multifidelity ABC rejection sampling (MF-ABC)}
	\begin{algorithmic}[1]
		\State Initialise $\eta_1,\eta_2$, $N$, $\epsilon, \tilde{\epsilon}$, $\discrep{\bvec{Y}_{\text{obs}}}{\cdot}$, $\approxdiscrep{\bvec{Y}_{\text{obs}}}{\cdot}$ and prior $p(\paramvec)$;
		\For {$i = 1,2,\ldots, N$}
		\State{Sample the prior $\paramvec^i \sim p(\paramvec)$};
		\State{Simulate the low-fidelity model $\tilde{\bvec{Y}_s} \sim \approxsimProc{\cdot}{\paramvec_i}$};
		\State Set $\tilde{w} \leftarrow \ind{(0,\tilde{\epsilon}]}{\approxdiscrep{\bvec{Y}_{\text{obs}}}{\tilde{\bvec{Y}}_s}}$ and $\eta \leftarrow \eta_1 \tilde{w} + \eta_2 (1 - \tilde{w})$;
		\If {$U < \eta$ where $U \sim \mathcal{U}(0,1)$}
		\State{Simulate the high-fidelity model $\bvec{Y}_s \sim \simProc{\cdot}{\paramvec_i}$};
		\State Set $w_i \leftarrow \tilde{w} + (\ind{(0,\epsilon]}{\discrep{\bvec{Y}_{\text{obs}}}{\bvec{Y}_s}} - \tilde{w})/\eta$;
		\Else
		\State Set $w_i \leftarrow \tilde{w}$;
		\EndIf
		\EndFor
		\State Set $\hat{f} \leftarrow \sum_{i=1}^N w_i f(\paramvec^i)/\sum_{i=1}^N w_i$.
	\end{algorithmic}
	\label{alg:mfabc}
\end{algorithm}

Prescott and Baker~\cite{Prescott2020} prove that when $\eta_1 >0$ and $\eta_2 > 0$ the multifidelity estimator (\eqref{eq:mfabc}) is asymptotically unbiased. The multifidelity ABC estimator may be viewed as a form of importance sampling and thereby has a bias of order $\mathcal{O}(1/N)$ (Supplementary Material). Unlike standard rejection and importance samplers, the multifidelity weights can be negative. Despite this fact, the multifidelity ABC estimator is both asymptotically unbiased and consistent~\cite{Prescott2021b}. Under certain conditions $\eta_1, \eta_2$ may be optimised such that the multifidelity ABC rejection sampler (Algorithm~\ref{alg:mfabc}) is more computationally efficient than direct rejection sampling with the high fidelity simulator (Algorithm~\ref{alg:ABC-rej}).  This improvement represents a decrease in the average computational cost for a given target mean-square error, but it does not improve the convergence rate. However, the multifidelity approach will not incur a significant overhead for selecting the continuation probabilities since adaptive schemes can be applied~\cite{Prescott2021b} (Supplementary Material).

\subsection{Multifidelity MLMC for ABC inference}

The MLMC and multifidelity approaches to ABC inference, MLMC-ABC and MF-ABC,  obtain computational improvements in distinct ways that are also complementary. MLMC-ABC (Algorithm~\ref{alg:mlmcabc}) combines a sequence of ABC rejection samplers with different discrepancy thresholds~\cite{Warne2018} while the stochastic simulation scheme is fixed and must be unbiased. Conversely, MF-ABC (Algorithm~\ref{alg:mfabc}) combines two stochastic simulation schemes~\cite{Prescott2020} with the other elements of the ABC sampler remaining largely unchanged. Computationally, MLMC-ABC empirically improves the convergence rate of the mean-square error as $\epsilon \to 0$ at the expense of a tuning step that incurs an additional cost~\cite{Warne2018}. This is consistent with theoretical and empirical results from other MLMC applications~\cite{Efendiev2015,Guha2017,Jasra2019}. In addition, MF-ABC reduces the average simulation cost without improvements in the convergence rate~\cite{Prescott2020,Prescott2021b}. Our novel contribution in this work is to show how these methods can be combined to exploit the computational advantages of both.

We now derive our new method, called \emph{multifidelity MLMC for ABC}~(MF-MLMC-ABC). Assume we have two sequences of $L$ ABC acceptance thresholds, $\{\epsilon_\ell\}_{\ell = 1}^{\ell = L}$ and $\{\tilde{\epsilon}_\ell\}_{\ell = 1}^{\ell = L}$, with $\epsilon_\ell > \epsilon_{\ell+1}$ and $\tilde{\epsilon}_\ell > \tilde{\epsilon}_{\ell+1}$ for all $\ell = 1,\ldots, L-1$, and a sequence of $L$ time-step lengths, $\{\tau_\ell\}_{\ell = 1}^{\ell = L}$. Note that there are no constraints on the relation between the two acceptance thresholds sequences, nor any requirement that the sequence of time-step lengths is strictly decreasing or even monotonic. Given a set of model parameters, $\paramvec \in \paramspace$, let $\bvec{Y}_s \sim \simProc{\cdot}{\paramvec}$ and $\bvec{Y}^\tau_s \sim \approxsimProct{\cdot}{\paramvec}{\tau}$ denote, respectively, exact stochastic simulation (i.e., using the Gillespie direct method~\cite{Gillespie1977}) and approximate stochastic simulation with time-step $\tau$ (i.e., using the tau-leaping method~\cite{Gillespie2001}). Finally, let $\discrep{\bvec{Y}_{\text{obs}}}{\cdot}$  and $\approxdiscrept{\bvec{Y}_{\text{obs}}}{\cdot}{\tau}$ be discrepancy metrics used to compare observed data $\bvec{Y}_{\text{obs}}$ with, respectively, exact simulation output, $\bvec{Y}_s$, or approximate simulation output, $\bvec{Y}^\tau_s$.

\begin{algorithm}[h]
	\caption{Multifidelity multilevel Monte Carlo for ABC rejection sampling (MF-MLMC-ABC)}
	\begin{algorithmic}[1]
		\State Initialise $\{\epsilon_\ell\}_{\ell = 1}^{\ell = L}$, $\{\tilde{\epsilon}_\ell\}_{\ell = 1}^{\ell = L}$, $\{\tau_\ell\}_{\ell = 1}^{\ell = L}$, $\{N_\ell\}_{\ell = 1}^{\ell = L}$,$\{(\eta_{\ell,1},\eta_{\ell,2})\}_{\ell = 1}^{\ell = L}$, $\discrep{\bvec{Y}_{\text{obs}}}{\cdot}$, $\approxdiscrept{\bvec{Y}_{\text{obs}}}{\cdot}{\tau_\ell}$ and prior $\PDF{\paramvec}$;
		%\State Set $\PDF{\paramvec_{\epsilon_0}} \leftarrow \PDF{\paramvec}$.
		\For{$\ell = 1,\ldots, L$}
		\For {$i = 1,2,\ldots, N_\ell$}
		\State{Sample the prior $\paramvec^i_\ell \sim p(\paramvec)$};
		\State{Simulate the low-fidelity model with $\tau = \tau_\ell$, $\bvec{Y}^{\tau_\ell}_s \sim \approxsimProct{\cdot}{\paramvec^i_\ell}{\tau_\ell}$};
		\State Set $\tilde{w} \leftarrow \ind{(0,\tilde{\epsilon}_\ell]}{\approxdiscrept{\bvec{Y}_{\text{obs}}}{\bvec{Y}^{\tau_\ell}_s}{\tau_\ell}}$ and $\eta^{\tau_\ell} \leftarrow \eta_{\ell,1} \tilde{w} + \eta_{\ell,2} (1 - \tilde{w})$;
		\If {$U < \eta^{\tau_\ell}$ where $U \sim \mathcal{U}(0,1)$}
		\State{Simulate the high-fidelity model $\bvec{Y}_s \sim \simProc{\cdot}{\paramvec^i_\ell}$};
		\State Set $w_\ell^i \leftarrow \tilde{w} + (\ind{(0,\epsilon]}{\discrep{\bvec{Y}_{\text{obs}}}{\bvec{Y}_s}} - \tilde{w})/\eta^{\tau_\ell}$;
		\Else
		\State Set $w_\ell^i \leftarrow \tilde{w}$;
		\EndIf
		\EndFor
		%\EndFor
		%\For{}
		\State{Set $W_\ell = 1/\sum_{i=1}^{N_\ell} w_\ell^i$};
		\State Set $\bar{F}_{\ell,j}(s) \leftarrow W_\ell\sum_{i=1}^{N_\ell} w_\ell^i \ind{(-\infty,s]}{\param_{\ell,j}}$ for $j = 1,\ldots,k$;
		%\EndFor
		%		\EndFor
		\If{$\ell = 1$}
		\State $\hat{f}_{\ell} \leftarrow W_\ell \sum_{i=1}^{N_\ell} w_\ell^i f(\paramvec^{i}_{\ell})$;
		\Else
		\For{$i = 1, \ldots, N_\ell$}
		
		%\For{$j = 1,\ldots, k$}
		\State Set $\tilde{\paramvec}_{{\ell-1}}^{i} \leftarrow \left[\hat{F}_{{\ell-1},1}^{-1}\left(\bar{F}_{\ell,1}^{N_\ell}\left(\param_{\ell,1}^{i}\right)\right), \ldots, \hat{F}_{{\ell-1},k}^{-1}\left(\bar{F}_{\ell,k}^{N_\ell}\left(\param_{\ell,k}^{i}\right)\right)\right]$;
		%\EndFor
		\EndFor
		%\For{$j = 1, \ldots, k$}
		%\State Set $\hat{Y}^{N_\ell}_{\epsilon_\ell,j}(s) \leftarrow \sum_{i=1}^{N_\ell} \left[\xi\left((s_j - \param_{\epsilon_\ell,j}^{i})/\delta_j\right) - \xi\left((s_j - \param_{\epsilon_{\ell-1},j}^{i})/\delta_j\right)\right]/N_\ell$.
		\State Set $\hat{F}_{\ell,j}(s) \leftarrow \hat{F}_{{\ell-1},j}(s)+ W_\ell \sum_{i=1}^{N_\ell}  w_\ell^i \left[\ind{(-\infty,s]}{\param_{\ell,j}^i} -\ind{(-\infty,s]}{\tilde{\param}_{\ell-1,j}^i}\right]$, $j = 1, \ldots, k$;
		%\EndFor
		\State Set $\hat{f}_{\ell} \leftarrow \hat{f}_{{\ell-1}} + W_\ell\sum_{i=1}^{N_\ell} w_\ell^i \left[f(\paramvec^{i}_{\ell}) - f(\tilde{\paramvec}^{i}_{{\ell-1}})\right]$.
		\EndIf
		\EndFor
	\end{algorithmic}
	\label{alg:mfmlmcabc}
\end{algorithm} 

Given the above notation, for any $\ell = 1,\ldots, L$ we can apply MF-ABC sampling using weights,
\begin{equation}
w^{\tau_\ell}(\paramvec_\ell) = \ind{(0,\tilde{\epsilon}_\ell]}{\approxdiscrept{\bvec{Y}_{\text{obs}}}{\bvec{Y}^{\tau_\ell}_s}{\tau_\ell}} + \frac{\ind{(0,\eta^{\tau_\ell}(\bvec{Y}^{\tau_\ell}_s)]}{U}}{\eta^{\tau_\ell}(\bvec{Y}^{\tau_\ell}_s)} \left[\ind{(0,\epsilon_\ell]}{\discrep{\bvec{Y}_{\text{obs}}}{\bvec{Y}_s}} - \ind{(0,\tilde{\epsilon}_\ell]}{\approxdiscrept{\bvec{Y}_{\text{obs}}}{\bvec{Y}^{\tau_\ell}_s}{\tau_\ell}}\right],
\label{eq:mfmlmcw}
\end{equation} 
with continuation probability function
\begin{equation}
\eta^{\tau_\ell}(\bvec{Y}^{\tau_\ell}_s) = 
\eta_{\ell,1}\ind{(0,\tilde{\epsilon}_\ell]}{\approxdiscrept{\bvec{Y}_{\text{obs}}}{\bvec{Y}^{\tau_\ell}_s}{\tau_\ell}} + \eta_{\ell,2}\ind{(\tilde{\epsilon}_\ell,\infty)}{\approxdiscrept{\bvec{Y}_{\text{obs}}}{\bvec{Y}^{\tau_\ell}_s}{\tau_\ell}},
\end{equation}
for constants $\eta_{\ell,1} > 0$, $\eta_{\ell,2} > 0$ for each $\ell = 1,\ldots, L$ and $U \sim \mathcal{U}(0,1)$. Based on the results of Prescott and Baker~\cite{Prescott2020}, an expectation estimated this way would be aysmptotically unbiased with respect to the ABC posterior under the exact stochastic simulation and discrepancy measure, that is,  $\paramvec_\ell \sim \CondPDF{\cdot}{\discrep{\bvec{Y}_{\text{obs}}}{\bvec{Y}_s} \leq \epsilon_\ell}$. This provides a connection to the MLMC-ABC telescoping summation in \eqref{eq:mlmcabc}. Therefore, we can apply the MF-ABC estimator (\eqref{eq:mfabc}) using weights defined by \eqref{eq:mfmlmcw} to each of the $L$ terms in the MLMC-ABC telescoping summation (\eqref{alg:mlmcabc}) and thereby arrive at the MF-MLMC-ABC estimator,
\begin{equation}
\E{f(\paramvec_L)} \approx \hat{f} = \sum_{\ell=1}^L\sum_{i=1}^{N_\ell} \frac{w^{\tau_\ell}(\paramvec_\ell^i) g_\ell(\paramvec_\ell^i)}{\sum_{j=1}^{N_\ell}w^{\tau_\ell}(\paramvec_\ell^j)},
\label{eq:mfmlmcabc}
\end{equation}
where 
\begin{equation}
g_\ell(\paramvec_\ell^i) = \begin{cases}
f(\paramvec_\ell^i) & \text{if }\ell = 1 \\
f(\paramvec_\ell^i) - f(\tilde{\paramvec}_{\ell-1}^i) & \text{if }\ell > 1
\end{cases},
\end{equation}
and $\tilde{\paramvec}_{\ell-1}^i$ is constructed from $\paramvec_\ell^i$ and estimated marginal distribution functions obtained from the previous $\ell -1$ terms, as given in \eqref{eq:mlmcabccouple}, to implement an approximate coupling between levels~\cite{Warne2019a,Warne2018}. Due to the properties of MF-ABC, \eqref{eq:mfmlmcabc} is an asymptotically unbiased estimator of \eqref{eq:mlmcabc}~\cite{Prescott2020} and therefore an asymptotically unbiased estimator of $\E{f(\paramvec_L)}$ up to the approximate coupling scheme~\cite{Warne2018}.

We therefore arrive at the MF-MLMC-ABC method presented in Algorithm~\ref{alg:mfmlmcabc}. Note that the proposed approach, just as with MLMC-ABC and MF-ABC, requires a number of algorithmic hyperparameters to be selected appropriately to ensure efficient sampling. In the next section we discuss theoretical results that guide how these parameters should be selected. 

\FloatBarrier

\subsection{Optimal algorithm configuration}

There are several important algorithmic hyperparameters that must be appropriately chosen to practically apply the MF-MLMC-ABC method. Each of these has various influences on the accuracy and performance of the MF-MLMC-ABC method. In this section, we will step through each of these algorithmic hyperparameters and provide theoretical results to optimally configure the method. The algorithmic hyperparameters that require optimisation are the number of levels, $L$, the form of the sequences $\{\epsilon_\ell\}_{\ell=1}^{\ell=L}$  and $\{\tau_\ell\}_{\ell=1}^{\ell=L}$, the sequence of samples to draw from each level, $\{N_\ell\}_{\ell=1}^{\ell=L}$, and the sequence of continuation probabilities, $\{(\eta_{\ell,1},\eta_{\ell,2})\}_{\ell = 1}^{\ell = L}$. Of these, $L$ and $\{\tau_\ell\}_{\ell=1}^{\ell=L}$ need to be selected heuristically, however, for any given $L$ and $\{\tau_\ell\}_{\ell=1}^{\ell=L}$, the sequences  $\{N_\ell\}_{\ell=1}^{\ell=L}$ and $\{(\eta_{\ell,1},\eta_{\ell,2})\}_{\ell = 1}^{\ell = L}$ may be optimised. 

For a given level $\ell$ and assuming $\tau_\ell$ is selected, the optimal $\eta_{\ell,1}$ and $\eta_{\ell,2}$ can be determined through optimising the limiting efficiency as the number of samples $N_\ell \to \infty$. Prescott and Baker~\cite{Prescott2020} show that this corresponds to minimising the function
\begin{equation}
\phi(\eta_{\ell,1},\eta_{\ell,2} ; g_\ell) = \E{w^{\tau_\ell}(\paramvec_\ell)^2(g_\ell(\paramvec_\ell) - \E{g_\ell(\paramvec_\ell)})^2}\E{C_\ell(\paramvec_\ell)},
\label{eq:phi}
\end{equation}
where $C_\ell(\paramvec_\ell)$ is the cost of computing the weight $w^{\tau_\ell}(\paramvec_\ell)$ (\eqref{eq:mfmlmcw}). If the acceptance state of an approximate simulation is interpreted as a classifier for the predicted acceptance state for an exact simulation, and the true positive rate exceeds the false positive rate, then \eqref{eq:phi} can be minimised for $(\eta_{\ell,1},\eta_{\ell,2}) \in (0,1]^2$ (See Lemmas 4.2 and 4.3 in Prescott and Baker~\cite{Prescott2020}). The optimal continuation probabilities are given by (See Corollary 4.4 in Prescott and Baker~\cite{Prescott2020})
\begin{equation}
(\eta_{\ell,1}^*,\eta_{\ell,2}^*) = \begin{cases}
\left(\sqrt{\dfrac{R_p^\ell}{R_0^\ell}},\sqrt{\dfrac{R_n^\ell}{R_0^\ell}}\right), & \text{if } \max{\{R_p^\ell,R_n^\ell\}} \leq R_0^\ell, \\
\left(1,\bar{\eta}_{\ell,2}\right), & \text{if } \max{\{R_p^\ell,R_n^\ell\}} > R_0^\ell \text{ and } \phi(1,\bar{\eta}_{\ell,2}) \leq \phi(\bar{\eta}_{\ell,1},1), \\
\left(\bar{\eta}_{\ell,1},1\right), & \text{otherwise},
\end{cases}
\label{eq:opteta}
\end{equation} 
where
\begin{equation}
\begin{split}
R_p^\ell = \frac{p^\ell_{fp} \E{c^{\tau_\ell}(\paramvec_\ell)}}{c^\ell_p}, \quad  R_n^\ell = \frac{p^\ell_{fn} \E{c^{\tau_\ell}(\paramvec_\ell)}}{c^\ell_n}, \quad R_0^\ell = p^\ell_{tp} - p^\ell_{fp}, \\  \bar{\eta}_{\ell,1} = \min\left\{1,\sqrt{\frac{R_p^\ell + p^\ell_{fp}c^\ell_n/c^\ell_p}{R^\ell_0 + p^\ell_{fn}}}\right\}, \quad \bar{\eta}_{\ell,2} = \min\left\{1,\sqrt{\frac{R_n^\ell + p^\ell_{fn}c^\ell_p/c^\ell_n}{R^\ell_0 + p^\ell_{fp}}}\right\},\label{eq:Retabar}
\end{split}
\end{equation}
and

\begin{equation}
\begin{split}
p^\ell_{tp} &= \E{\ind{(0,\epsilon_\ell]}{\discrep{\bvec{Y}_{\text{obs}}}{\bvec{Y}_s}}\ind{(0,\tilde{\epsilon}_\ell]}{\approxdiscrept{\bvec{Y}_{\text{obs}}}{\bvec{Y}^{\tau_\ell}_s}{\tau_\ell}}(g_\ell(\paramvec_\ell) - \E{g_\ell(\paramvec_\ell)})^2},\\
p^\ell_{fp} &= \E{\ind{(\epsilon_\ell,\infty]}{\discrep{\bvec{Y}_{\text{obs}}}{\bvec{Y}_s}}\ind{(0,\tilde{\epsilon}_\ell]}{\approxdiscrept{\bvec{Y}_{\text{obs}}}{\bvec{Y}^{\tau_\ell}_s}{\tau_\ell}}(g_\ell(\paramvec_\ell) - \E{g_\ell(\paramvec_\ell)})^2},\\
p^\ell_{fn} &= \E{\ind{(0,\epsilon_\ell]}{\discrep{\bvec{Y}_{\text{obs}}}{\bvec{Y}_s}}\ind{(\tilde{\epsilon}_\ell,\infty]}{\approxdiscrept{\bvec{Y}_{\text{obs}}}{\bvec{Y}^{\tau_\ell}_s}{\tau_\ell}}(g_\ell(\paramvec_\ell) - \E{g_\ell(\paramvec_\ell)})^2},\\
c^\ell_p &= \CondE{c(\paramvec_\ell)}{\approxdiscrept{\bvec{Y}_{\text{obs}}}{\bvec{Y}^{\tau_\ell}_s}{\tau_\ell} \leq \tilde{\epsilon}_\ell}, \\
c^\ell_n &= \CondE{c(\paramvec_\ell)}{\approxdiscrept{\bvec{Y}_{\text{obs}}}{\bvec{Y}^{\tau_\ell}_s}{\tau_\ell} > \tilde{\epsilon}_\ell}.\\
\end{split}
\label{eq:pcs}
\end{equation}
In Equations (\ref{eq:Retabar}) and (\ref{eq:pcs}), $c(\paramvec_\ell)$ and $c^{\tau_\ell}(\paramvec_\ell)$ denote, respectively, the cost of generating an exact realisation, $\bvec{Y}_s \sim \simProc{\cdot}{\paramvec_\ell}$, and an approximate realisation, $\bvec{Y}_s^{\tau_\ell} \sim \approxsimProct{\cdot}{\paramvec_\ell}{\tau_\ell}$.

To optimise the sequence of samples $\{N_\ell\}_{\ell=1}^{\ell= L}$, we aim to minimise the total expected computational cost of computing the MF-MLMC-ABC estimator, that is, $\E{C(\hat{f})} = \sum_{\ell=1}^L N_\ell\E{C_\ell(\paramvec_\ell)}$, subject to the constraint $\V{\hat{f}} \propto h^2$ where $h^2$ is the target variance. Using a Lagrange multiplier, it can be shown (See Giles~\cite{Giles2008}, Lester et al.~\cite{Lester2014}, and Warne et al.~\cite{Warne2018}) that the following scaling is optimal,
\begin{equation}
N_\ell \propto h^{-2}\sqrt{\frac{v_\ell}{c_\ell}}\sum_{m=1}^{L}\sqrt{v_m c_m},\quad \text{for } \ell = 1,2, \ldots, L,
\label{eq:mloptN}
\end{equation} 
where $c_\ell = \E{C_\ell(\paramvec_\ell)}$ and $v_\ell = \E{w^{\tau_\ell}(\paramvec_\ell)^2(g_\ell(\paramvec_\ell) - \E{g_\ell(\paramvec_\ell)})^2} / \E{w^{\tau_\ell}(\paramvec_\ell)}^2$.

Relative to ABC rejection sampling, the MF-MLMC-ABC method is asymptotically unbiased, with the bias at level $\ell$ being of the order importance sampling $\mathcal{O}(1/N_\ell)$ (Supplementary Material). The effect of this bias to be considered along with the optimal sample size scaling in \eqref{eq:optNphi}, especially for the terminal level $L$. That is, we require the bias due to MF-ABC sampling to be small compared with the bias incurred from the ABC-based approximations, which are $\mathcal{O}(\epsilon)$. This is difficult to test in practice, however, it is a common feature of any ABC method based on importance sampling.

%\todo{somewhere in this we need a comment on variance reduction, MSE of the ABC method and the MF bias/MLMC coupling bias... comment that this bias can be controlled to be small relative to the bias from ABC.}

\subsection{Practical algorithm tuning}

A practical choice for the selection of most components of MF-MLMC-ABC is immediately available from the target ABC-based inference problem, for example, the exact stochastic simulation process $\bvec{Y}_s \sim \simProc{\cdot}{\paramvec}$, the prior probability density, $\PDF{\paramvec}$, the discrepancy metric $\discrep{\bvec{Y}_{\text{obs}}}{\cdot}$, and the target acceptance threshold $\epsilon_L$. Other choices are easily motivated. The largest acceptance threshold, $\epsilon_1$, can be chosen so that $\CondPDF{\paramvec}{\discrep{\bvec{Y}_{\text{obs}}}{\bvec{Y}_s} \leq \epsilon_1} \approx \PDF{\paramvec}$. The approximate stochastic simulation scheme, $\bvec{Y}_s^\tau \sim \approxsimProct{\cdot}{\paramvec}{\tau}$, could be chosen from a range of candidates, but a first-order method such as tau-leaping will be an appropriate default choice in many cases. Given a tau-leaping scheme for the approximate model, it will often be appropriate to take $\approxdiscrept{\bvec{Y}_{\text{obs}}}{\cdot}{\tau} = \discrep{\bvec{Y}_{\text{obs}}}{\cdot}$, and therefore $\tilde{\epsilon}_\ell = \epsilon_\ell$ for all $\ell = 1,\ldots,L$ is applicable. Of course, there is freedom and flexibility in all of these choices, and the discussion section highlights some potentially useful alternative strategies.

Next the number of levels, $L$, needs to be determined. Unfortunately, there is no general theory for this choice. However one common approach from the MLMC literature is to consider a geometric sequence $\epsilon_\ell = \epsilon_1 m^{-\ell+1}$ then set $L = 1 - \log_m(\epsilon_L/\epsilon_1)$. Some heuristics do exist to determine the appropriate scale factor $m > 1$. For MLMC with SDEs Giles~\cite{Giles2008} demonstrated $m = 4$ is optimal, however, due to the approximate coupling scheme for MLMC-ABC, Warne et al.,~\cite{Warne2019a,Warne2018} propose $m \in [1.5,2.5]$ as a practical choice for inference. 

Choice of the sequence of time-steps, $\{\tau_\ell\}_{\ell=1}^{\ell=L}$, can be guided by Equations (\ref{eq:opteta})--(\ref{eq:pcs}). Firstly, we wish $\tau_\ell$ to be small enough to exhibit low false positive and false negative rates; this will result in smaller optimal continuation probabilities (\eqref{eq:Retabar}) and reduce the number of times the high-fidelity model is simulated~\cite{Prescott2020}. However, the speed-up factor for low-fidelity simulations over high-fidelity simulations needs to be sufficiently high, otherwise there is not enough of a computational benefit~\cite{Prescott2021b}. Therefore, $\tau_\ell$ cannot be arbitrary small. To tune $\tau_\ell$, some experimentation is required to compare the differences computation time and the acceptance state between pairs of exact and approximate simulations. Fortunately, it is possible to identify poor choices of $\tau_\ell$, since the adaptive tuning scheme for the optimal continuation probabilities will be unable to improve upon standard ABC rejection sampling. It is also important to note that $\{\tau_\ell\}_{\ell=1}^{\ell=L}$ need not be a strictly decreasing or even monotonic sequence, and we demonstrate in the results section that $\tau_1 = \tau_2 = \cdots = \tau_L$ works quite well for many applications. Furthermore, introducing a coupling scheme between the exact and approximate simulations~\cite{Prescott2020,Anderson2012,Lester2014} can reduce mis-classification rates of larger values of $\tau_\ell$ and lead to improved performance.

In practice, \eqref{eq:opteta} is solved for optimal continuation probabilities through the generation of initial trial samples~\cite{Prescott2020}. In this work, we extend this through adaptive updates to $(\eta_{\ell,1},\eta_{\ell,2})$ while generating samples at level $\ell$. For the trial samples, we generate $M$ samples using Algorithm~\ref{alg:mfabc} with $\eta_{\ell,1} = \eta_{\ell,2} = 1$, then initial estimates of the expectations in \eqref{eq:Retabar} are produced through direct Monte Carlo estimates. Next, gradient descent is applied to iteratively update $\eta_{\ell,1}$ and $\eta_{\ell,2}$ toward the optimum in \eqref{eq:opteta} while also refining the estimates in Equations (\ref{eq:Retabar}). While it is possible to iteratively refine the solution to \eqref{eq:opteta} directly as the sampling proceeds~\cite{Prescott2020}, we find that this is extremely sensitive to the initial estimates. Therefore we utilise our adaptive gradient descent MF-ABC sampler based on recent developments~\cite{Prescott2021b} as a robust alternative to Algorithm~\ref{alg:mfabc} (Supplementary Material).

Finally, to apply MF-MLMC-ABC the sequence of sample numbers, $\{N_\ell\}_{\ell=1}^{\ell= L}$, are needed. Fortunately, we can rewrite \eqref{eq:mloptN} as
\begin{equation}
N_\ell \propto \frac{1}{h^{-2}} \frac{\sqrt{\phi(\eta_{\ell,1},\eta_{\ell,2}; g_\ell)}}{\E{C_\ell(\paramvec_\ell)}\E{w^{\tau_\ell}(\paramvec_\ell)}} \sum_{m=1}^L \frac{\sqrt{\phi(\eta_{m,1},\eta_{m,2}; g_m)}}{\E{w^{\tau_m}(\paramvec_m)}},
\label{eq:optNphi}
\end{equation}
to highlight that the optimal sequence of sample numbers, $\{N_\ell\}_{\ell=1}^{\ell = L}$, is dependent on the optimal continuation probabilities obtained using the adaptive gradient descent MF-ABC scheme at each level. Therefore, we can estimate the terms required for optimal $\{N_\ell\}_{\ell=1}^{\ell = L}$ directly from the same trial samples used to estimate optimal $\{(\eta_{\ell,1},\eta_{\ell,2})\}_{\ell=1}^{\ell=L}$.

\subsection{Summary}

The MF-ABC and MLMC-ABC methods exploit the multilevel telescoping summation, and properties of approximate stochastic simulation and ABC sampling in distinct and complementary ways. In the first instance, MF-ABC can improve on the expected cost of stochastic simulation in ABC rejection sampling using the randomized bias correction term~\cite{Prescott2020,Rhee2015}. MLMC-ABC assumes exact stochastic simulation for ABC rejection sampling, but applies MLMC techniques to a sequence of correlated samplers with discrepancy thresholds $\epsilon_1 > \epsilon_2 > \cdots > \epsilon_L$ with the effect of improving the convergence rate~\cite{Jasra2019,Warne2018}. We develop a new method, MF-MLMC-ABC, that results from the application of MF-ABC sampling for each of the terms in the MLMC-ABC telescoping summation. In the next section we demonstrate how to tune these methods practically and show the computational benefits of both can be exploited to achieve improvements of two orders of magnitude.

\FloatBarrier

\section{Results}

Using a variety of biologically relevant stochastic biochemical reaction network models, we demonstrate the substantial computational improvements using our new MF-MLMC-ABC method (Algorithm~\ref{alg:mfmlmcabc}). First we consider the properties of the MLMC-ABC (Algorithm~\ref{alg:mlmcabc}) and MF-ABC (Algorithm~\ref{alg:mfabc}) methods for a fundamental biochemical building block, the Michaelis--Menten model, to show practically how to tune these methods. Then we apply these guidelines to tune our new MF-MLMC-ABC method for the repressilator gene regulatory network to show the performance benefits over MF-ABC and MLMC-ABC. Finally, we perform a realistic test on the two-step MAPK cascade network that is of fundamental importance in cell biology. For two computationally challenging networks we show that MF-MLMC-ABC effectively combines the advantages of both MF-ABC and MLMC-ABC to accelerate ABC rejection sampling by two orders of magnitude.  

In the sections that follow, we focus on the performance improvements of our new approach for the purpose of estimating posterior means of unknown parameters. However, it is important to note that our approach, arising from \eqref{eq:mfmlmcabc}, handles expectations of an arbitrary function of the unknown parameters, $f(\paramvec)$. This could include posterior probabilities or densities, or even central moments of the posterior predictive distribution, that is the mean and variance of the biochemical network state $\bvec{X}_T$ given $\paramvec$. We demonstrate the use of MF-MLMC-ABC for estimation of marginal densities for the two-stage MAPK example.

%\todo{For Rev 1: fixed markers to figures 2,3 and 5 and make fig 7 readable}

%\todo{Generate posterior marginals for the appendix}

%\todo{in one of these examples need to look at bias... I think the repressilator is a good one for this... Need to talk to Tom about the best way to incorporate this.}

\subsection{Initial explorations of MF-ABC and MLMC-ABC: Michaelis--Menten kinetics}

Using a stochastic network of Michaelis--Menten enzyme kinetics~\cite{Michaelis1913,Rao2003}, we demonstrate the essential requirements and computational benefits of the MF-ABC and MLMC-ABC methods in order to inform the configuration of MF-MLMC-ABC for more challenging networks. The Michaelis--Menten enzyme kinetics model 
describes the catalytic conversion of a substrate, $S$, into a product, $P$, via an enzymatic reaction involving enzyme, $E$, 
\begin{equation}
\label{eq:michment}
E + S \overset{k_1}{\rightarrow} [ES], \quad [ES] \overset{k_2}{\rightarrow} E + S, \quad [ES] \overset{k_3}{\rightarrow} E+ P,
\end{equation}
with kinetic rate parameters, $k_1$, $k_2$, and $k_3$.
Biologically, this network is of interest since many intracellular processes are built from Michaelis--Menten sub-components. Computationally, the Michaelis--Menten model is a minimal example of a network without a closed-form solution to the CME, however, with only three rate parameters and four chemical species, ABC inference is feasible even with rejection sampling~\cite{Warne2019a}.

\begin{figure}[h]
	\centering
	\includegraphics[width=0.6\linewidth]{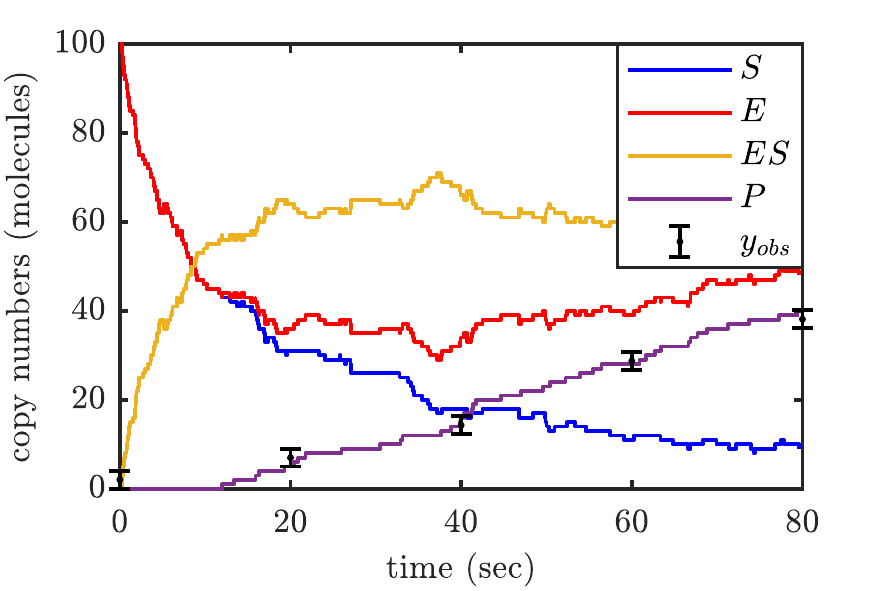}
	\caption{Example realisation of the Michaelis--Menten model along with noisy observations of the product molecules $y_{\text{obs}}(t) \sim \mathcal{N}(P_t, \sigma^2)$ (error bars indicate $y_{\text{obs}}(t) \pm \sigma$). Here, the initial condition is $E_0 = S_0 = 1000$, the true rate parameters are $k_1 = 0.001$, $k_2 = 0.005$ and $k_3 = 0.01$, and observations are taken at $t_i = 20i$, for $i = 1, 2, 3, 4$ with standard deviation $\sigma = 10$.   }
	\label{fig:michmentdata}
\end{figure}

For our simulated data we consider the realistic scenario where only the product molecules are directly observed (typically via fluorescent tagging of proteins~\cite{Finkenstadt2008,Iafolla2008,Bajar2016}), that is,
\begin{equation}
y_{\text{obs}}(t)  \sim \mathcal{N}(P_t,\sigma^2),
\label{eq:michmentobs}
\end{equation} 
where $\sigma$ is the standard deviation of the additive Gaussian observation noise. In real applications, especially for low copy numbers, it may be more appropriate to consider multinomial noise or multiplicative Gaussian noise~\cite{Simpson2020,Georgii2012}, however, for the purposes of the numerical experiments we present here, additive Gaussian noise is perfectly reasonable. Figure~\ref{fig:michmentdata} shows an example realisation of the Michaelis--Menten model with simulated observations indicated at discrete times, $t_1 = 20, t_2 = 40, t_3 = 60$, and $t_4 = 80$.

Using the Michaelis--Menten model and the noisy partial observations, we explore the effects of varying the parameters $L$ and $\tau$ on the performance of MLMC-ABC and MF-ABC, respectively, with a tau-leaping method assumed for the approximate simulation scheme. To this end, we consider the ABC inference problem,
\begin{equation}
\CondE{k_3}{\bvec{Y}_{\text{obs}}} \approx \int_{\mathbb{R}^3} k_3 \CondPDF{\paramvec}{\discrep{\bvec{Y}_{\text{obs}}}{\bvec{Y}_s} \leq \epsilon} \, \text{d}\paramvec,
\end{equation}
where $\paramvec = [k_1,k_2,k_3]$ is the vector of unknown rate parameters, $\bvec{Y}_{\text{obs}} = [y_{\text{obs}}(t_1),\ldots, y_{\text{obs}}(t_4)]$ is noisy observations of product copy numbers at discrete times $t_1, \ldots, t_4$, $\bvec{Y}_s \sim \simProc{\cdot}{\paramvec}$ is simulated data of the Michaelis--Menten model using the Gillespie direct method and simulating the observation process (\eqref{eq:michmentobs}), $\epsilon$ is the discrepancy threshold, and the discrepancy metric is $\discrep{\bvec{Y}_{\text{obs}}}{\bvec{Y}_s} = \|\bvec{Y}_{\text{obs}} -\bvec{Y}_s\|_2$ where $\| \cdot,\|_2$ is the Euclidean norm. Independent uniform priors are used with $k_1 \sim \mathcal{U}(0,0.003)$, $k_2 \sim \mathcal{U}(0,0.0015)$, and $k_3 \sim \mathcal{U}(0,0.05)$. 

\begin{figure}[h]
	\centering
	\includegraphics[width=0.8\linewidth]{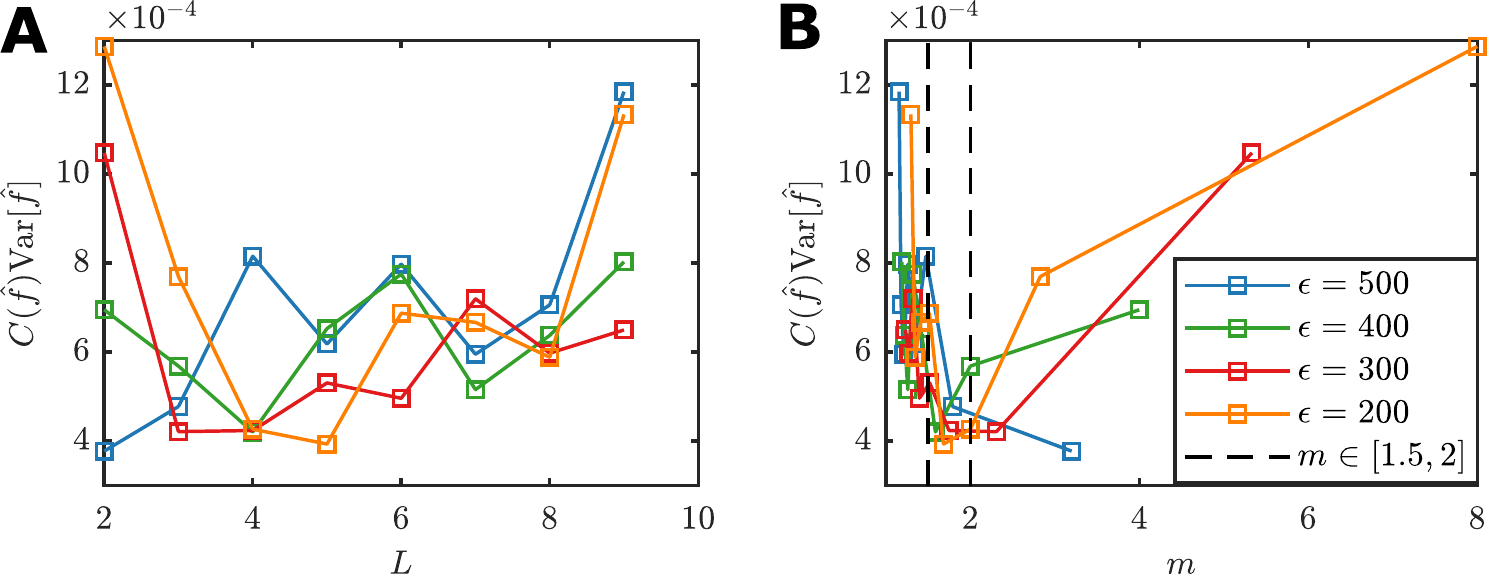}
	\caption{(A) The effect of the number of levels, $L$, in MLMC-ABC on the sampling efficiency (as determined by the product of computational cost and the estimator variance). (B) Equivalent plot in terms of $m = (\epsilon_1/\epsilon_L)^{1/(L-1)}$ to highlight that this product is minimised for the scaling $m \in [1.5,2]$ (black dashed lines) regardless of the target epsilon. Target discrepancy thresholds are indicated by dashed lines and in all cases $\epsilon_1 = 1600$.}
	\label{fig:tunemichment}
\end{figure}

We first explore MLMC-ABC (Algorithm~\ref{alg:mlmcabc}) in terms of the effect of the number of levels, $L$, using data generated with \eqref{eq:michmentobs} with $\sigma = 2$ at discrete times $t_1 = 20, t_2 = 40, t_3 = 60$, and $t_4 = 80$ using a single realisation given initial conditions $E_0 = S_0 = 1000$ and $[ES]_0 = P_0 = 0$, and rate parameters $k_1 = 0.001$, $k_2 = 0.005$ and $k_3 = 0.01$. We use $\epsilon_1 = 1600$ and apply MLMC-ABC for different numbers of levels $L \in [2,3,\ldots,9]$ and different target discrepancy thresholds $\epsilon_L = \epsilon \in [200, 300, \ldots, 600]$. Optimal sample numbers $\{N_\ell\}_{\ell=1}^{\ell=L}$ are obtained using a Lagrange multiplier approach~\cite{Giles2008,Lester2014,Warne2018} and rescaled so that $N_L = 16$. All stochastic simulations, $\bvec{Y}_s \sim \simProc{\cdot}{\paramvec}$, are exact using the Gillespie direct method (Algorithm~\ref{alg:essa}). Figure~\ref{fig:tunemichment} shows how the trade-off between computational cost and estimator variance is affected by the number of levels, $L$. The trends in terms of $L$ (Figure~\ref{fig:tunemichment}(A)) are not very meaningful because for a fixed $L$ the scale factor $m$ will be different for each target threshold. However, when the same data are presented in terms of the scale factor, $m$ (Figure~\ref{fig:tunemichment}(B)), the optimal choice is consistently contained within $m \in [1.5,2]$. This result aligns closely with the previous work on MLMC-ABC~\cite{Warne2018}. Therefore, we conclude that a good heuristic for a given $\epsilon_1$ and $\epsilon_L$ is to choose $m$ within this interval such that $L = 1 - \log_m(\epsilon_L/\epsilon_1)$ is a positive integer.

Next we look at the more nuanced problem of tuning the tau-leap time-step, $\tau$, in the context of MF-ABC (Algorithm~\ref{alg:mfabc}) for performance. Using the same data configuration and ABC problem definition as for MLMC-ABC we apply MF-ABC for different time-steps $\tau \in [0.005,0.01,0.02,0.04,0.08,0.16,0.32,0.64,1.28]$ and for the same set of target discrepancy thresholds as for MLMC-ABC. We also take $\tilde{\epsilon} = \epsilon$, and $\approxdiscrept{\bvec{Y}_{\text{obs}}}{\bvec{Y}^\tau_s}{\tau} = \|\bvec{Y}_{\text{obs}} - \bvec{Y}^\tau_s\|_2$ with $\bvec{Y}_s^\tau \sim \approxsimProct{\cdot}{\paramvec}{\tau}$ is the approximate stochastic simulation process using the tau-leaping method (Algorithm~\ref{alg:assa}).

Figure~\ref{fig:amftautunemichment} shows the effect of varying $\tau$ on the computational cost of generating the $N$ weighted samples, and the optimal continuation probabilities $\eta_1$ and $\eta_2$, as determined by the adaptive update scheme. Note that some choices of $\tau$ result in very small values for both $\eta_1$ and $\eta_2$ (Figure~\ref{fig:amftautunemichment}(B)--(C)). However, this does not translate into a computational improvement, since the total compute time is always larger thatn the worst case with $\eta_1 = \eta_2 = 1$ (Figure~\ref{fig:amftautunemichment}(A)). We therefore conclude that there is no computational advantage in using MF-ABC for the Michaelis--Menten model as specified here. Effectively, the tau-leaping method does not provide enough of a computational benefit over the Gillespie direct method for the Michaelis--Menten model with the given initial conditions. 

\begin{figure}[h]
	\centering
	\includegraphics[width=0.8\linewidth]{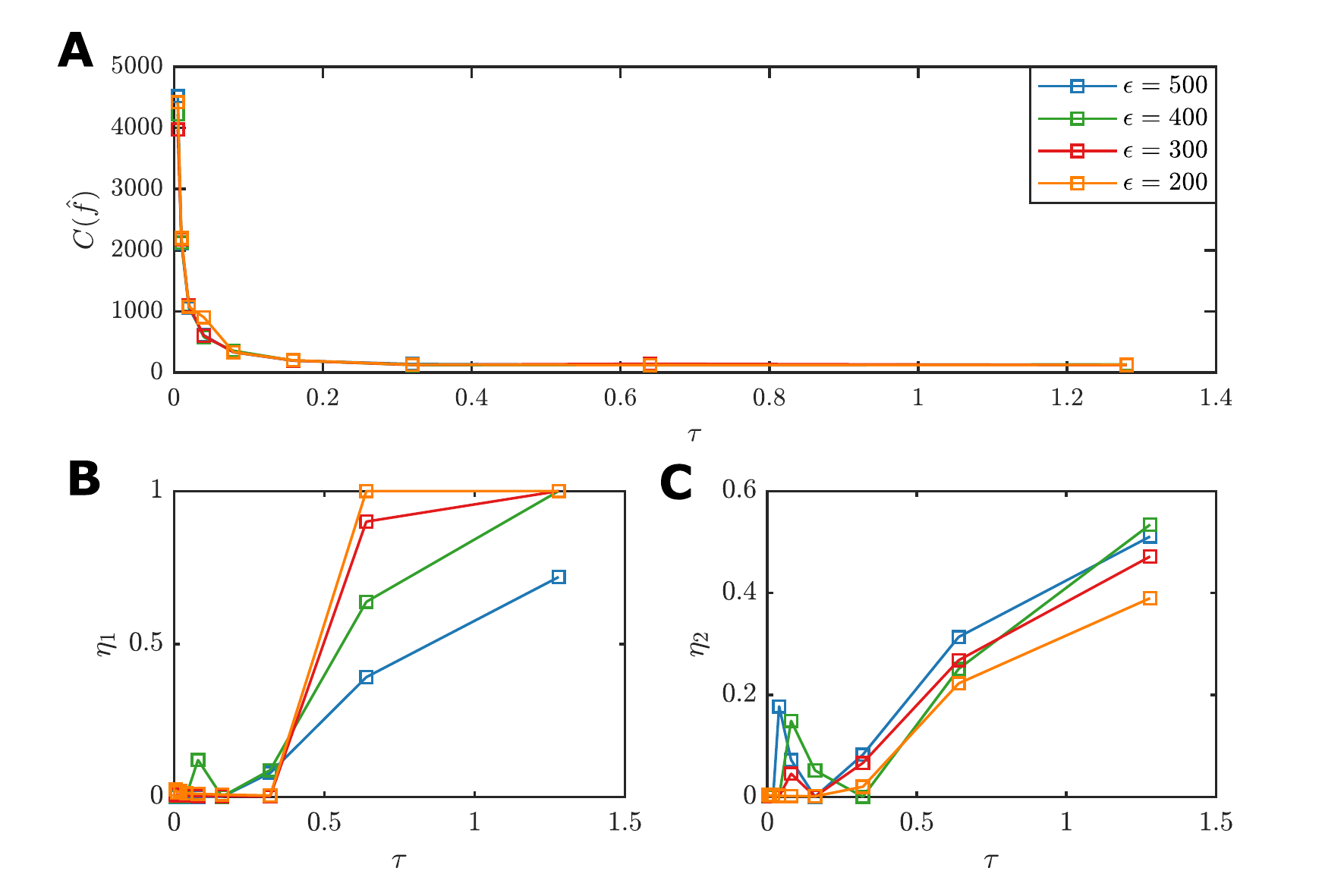}
	\caption{The relationship between the approximate stochastic simulation time-step, $\tau$ and: (A) the cost of generating $N = 10000$ weighted samples; (B) the continuation probability when an approximate stochastic simulation is accepted; and (C) the continuation probability when an approximate stochastic simulation is rejected. Coloured lines indicate the effect of $\tau$ for different discrepancy thresholds $\epsilon$.}
	\label{fig:amftautunemichment}
\end{figure}

A key message from this section is that before employing MLMC-ABC or MF-ABC, and by extension our new method MF-MLMC-ABC, some initial exploration should be performed. While MF-ABC provides no benefit for the Michaelis--Menten model, this is largely due to the simplicity of the network. In more complex models that we consider in the next two sections, MF-ABC provides a substantial improvement leading to high efficiency with the MF-MLMC-ABC method.

\subsection{Tuning and performance of MF-MLMC-ABC: Repressilator gene regulatory network}

We now demonstrate the computational benefits of MF-MLMC-ABC using a stochastic gene regulatory network called the repressilator~\cite{Elowitz2000}. The repressilator describes the expression levels of three genes, $G_1, G_2,$ and $G_3$, in which the expression of $G_i$ inhibits the expression of $G_{(i\mod 3) + 1}$, forming a cycle that results in stochastic oscillations. Each gene, $G_i$, consists of two reactions that describe gene expression, through the transcription of mRNA, $M_i$, and translation into protein, $P_i$, and two reactions that describe the degradation of mRNA and protein molecules. For the $i$th gene we have
\begin{equation}
G_i \xlongrightarrow{\alpha_0 + \alpha K^n/(K^n+P_j^n)} G_i +  M_i,\quad  M_i \overset{\beta}{\rightarrow} M_i + P_i, \quad  P_i \overset{\beta}{\rightarrow} \emptyset, \quad\text{and}\quad M_i \overset{\gamma}{\rightarrow} \emptyset, \label{eq:repr}
\end{equation}
where $j = (i+1 \mod 3) +1$, $\alpha_0 \geq 0$ is the leakage transcription rate (the transcription rate of a maximally inhibited gene), $\alpha + \alpha_0 > 0$ is the free transcription rate (uninhibited transcription rate), $n \geq 0$ is the Hill coefficient that describes the strength of the repressive effect of the inhibitor protein $P_j$, $K$ is the number of $P_j$ inhibitor proteins required to reduce the transcription rate of $G_i$ by 50\% (excluding leakage), $\beta > 0$ is the protein translation and degradation rate, and $\gamma > 0$ is the mRNA degradation rate.

We consider noisy observations of the protein copy numbers since these will be the only observables via fluorescent markers. This yields the observation process
\begin{equation}
\bvec{y}_{\text{obs}}(t)  \sim \mathcal{N}([P_{1,t},P_{2,t},P_{3,t}]^\text{T},\sigma^2\bvec{I}),
\label{eq:repobs}
\end{equation} 
where $\bvec{I}$ the $3 \times 3$ identity matrix. Just as with the Michaelis--Menten example, alternate noise models could be utilised. Discrete observations are taken at regular one time unit intervals, $t_i = i$, for $i = 0,1, \ldots, 10$. Figure~\ref{fig:repdata} shows an example realisation of the repressilator model along with discrete observations of the protein molecules.

\begin{figure}[h]
	\centering
	\includegraphics[width=\linewidth]{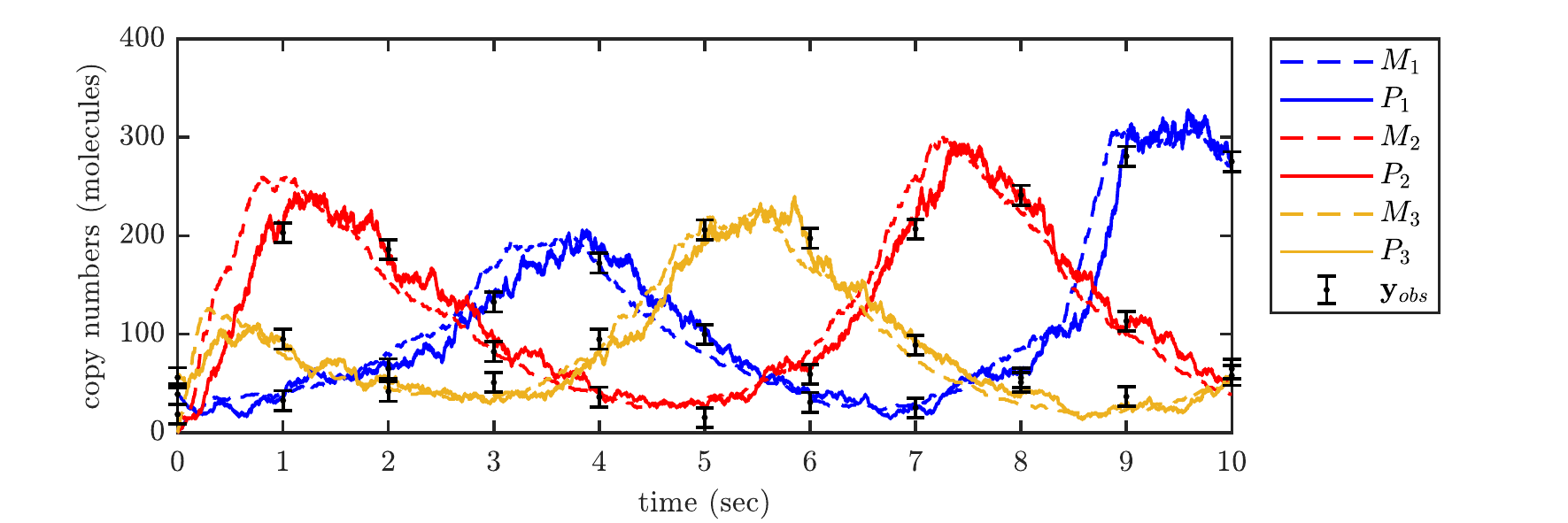}
	\caption{Example realisation of the repressilator gene regulatory network model along with noisy observations of the three protein molecules, $\bvec{y}_{\text{obs}}(t) \sim \mathcal{N}([P_{1,t},P_{2,t},P_{3,t}]^{\text{T}}, \sigma^2\bvec{I})$ (error bars indicate $\bvec{y}_{\text{obs}}(t) \pm \sigma$). Here, the initial condition is $M_{1,0} = M_{2,0} = M_{3,0} = 0$, $P_{1,0} = 40$, $P_{2,0} = 20$, $P_{3,0} = 60$, the true rate parameters are $\alpha_0 = 1$, $\alpha = 1000$, $K = 20$, $n = 2$, $\beta = 5$ and $\gamma = 1$, and observations are taken at $t_i = i$, for $i = 1, 2,\ldots 10$ with standard deviation $\sigma = 10$}
	\label{fig:repdata}
\end{figure}

This model is a common choice to benchmark the performance of likelihood-free inference methods since the oscillatory behaviour renders the acceptance probability to be very low~\cite{Warne2020,Prescott2020,Toni2009}. For our MF-MLMC-ABC method, the repressilator is particularly interesting due to the possibility that coupled pairs of exact simulations and approximate simulations will go out of phase with each other. This means the time-step sequence $\{\tau_\ell\}_{\ell=1}^{\ell = L}$ must be carefully chosen.

In the target ABC inference problem we consider assumes only the parameters of Hill functions to be unknown and evaluates
\begin{equation}
\CondE{K}{\bvec{Y}_{\text{obs}}} \approx \int_{\mathbb{R}^2} K\CondPDF{\paramvec}{\discrep{\bvec{Y}_{\text{obs}}}{\bvec{Y}_s} \leq \epsilon} \, \text{d}\paramvec,
\label{eq:repABC}
\end{equation}
where $\paramvec = [K,n]$ is the vector of unknown Hill function parameters, $\bvec{Y}_{\text{obs}} = [\bvec{y}_{\text{obs}}(t_0),\ldots, \bvec{y}_{\text{obs}}(t_{10})]$ are noisy observations of protein copy numbers at discrete times $t_1,\ldots, t_{10}$ (\eqref{eq:repobs}), $\bvec{Y}_s \sim \simProc{\cdot}{\paramvec}$ is simulated data of the repressilator model generated using the Gillespie direct method and the observation process (\eqref{eq:repobs}), $\epsilon$ is the discrepancy threshold, and the discrepancy metric is $\discrep{\bvec{Y}_{\text{obs}}}{\bvec{Y}_s} = \|\bvec{Y}_{\text{obs}} -\bvec{Y}_s\|_2$ where $\| \cdot,\|_2$ is the Euclidean norm. Independent uniform priors are used with $K \sim \mathcal{U}(10,30)$, $n \sim \mathcal{U}(1,4)$. We treat the rate parameters as known with $\alpha_0 = 1$, $\alpha = 1000$, $\beta = 5$ and $\gamma = 1$.

To apply MF-MLMC-ABC we require an appropriate sequence of time-steps $\{\tau_\ell\}_{\ell=1}^{\ell=L}$. To tune this sequence, we draw a small number, $N = 10000$, of MF-ABC weighted samples using the adaptive optimisation scheme for continuation probabilities (Supplementary Material) for a range of discrepancy thresholds $\epsilon \in [200,300,\ldots, 600]$ and time-steps $\tau \in [0.005,0.01,0.02,\ldots, 0.64]$.  We also take $\tilde{\epsilon} = \epsilon$ and $\approxdiscrept{\bvec{Y}_{\text{obs}}}{\bvec{Y}^\tau_s}{\tau} = \|\bvec{Y}_{\text{obs}} - \bvec{Y}^\tau_s\|_2$, with $\bvec{Y}_s^\tau \sim \approxsimProct{\cdot}{\paramvec}{\tau}$ the approximate stochastic simulation process using the tau-leaping method (Algorithm~\ref{alg:assa}). Figure~\ref{fig:amftautune} show the relationship between $\tau$, the computational cost, and continuation probabilities for each of the target discrepancy thresholds. If these discrepancies represented the MF-MLMC-ABC discrepancy sequence $\epsilon_1 = 600, \ldots, \epsilon_L = 300$ with $L = 4$, then Figure~\ref{fig:amftautune}(A) can be used to identify the optimal sequence for each $\tau_\ell$ by finding the value of $\tau_\ell$ with the lowest expected cost $C(\hat{f})$. This suggests a sequence of $\tau_1 = 0.08, \tau_ 2 = 0.04, \tau_3 = 0.01, \tau_4 = 0.02$. If we restrict our choice to a single time-step value to apply to all levels, then $\tau_1 = \tau_2 = \cdots \tau_L = 0.02$ is the best overall as this value for $\tau_\ell$ results in the lowest total cost for $N$ samples from each level.

\begin{figure}
	\centering
	\includegraphics[width=0.8\linewidth]{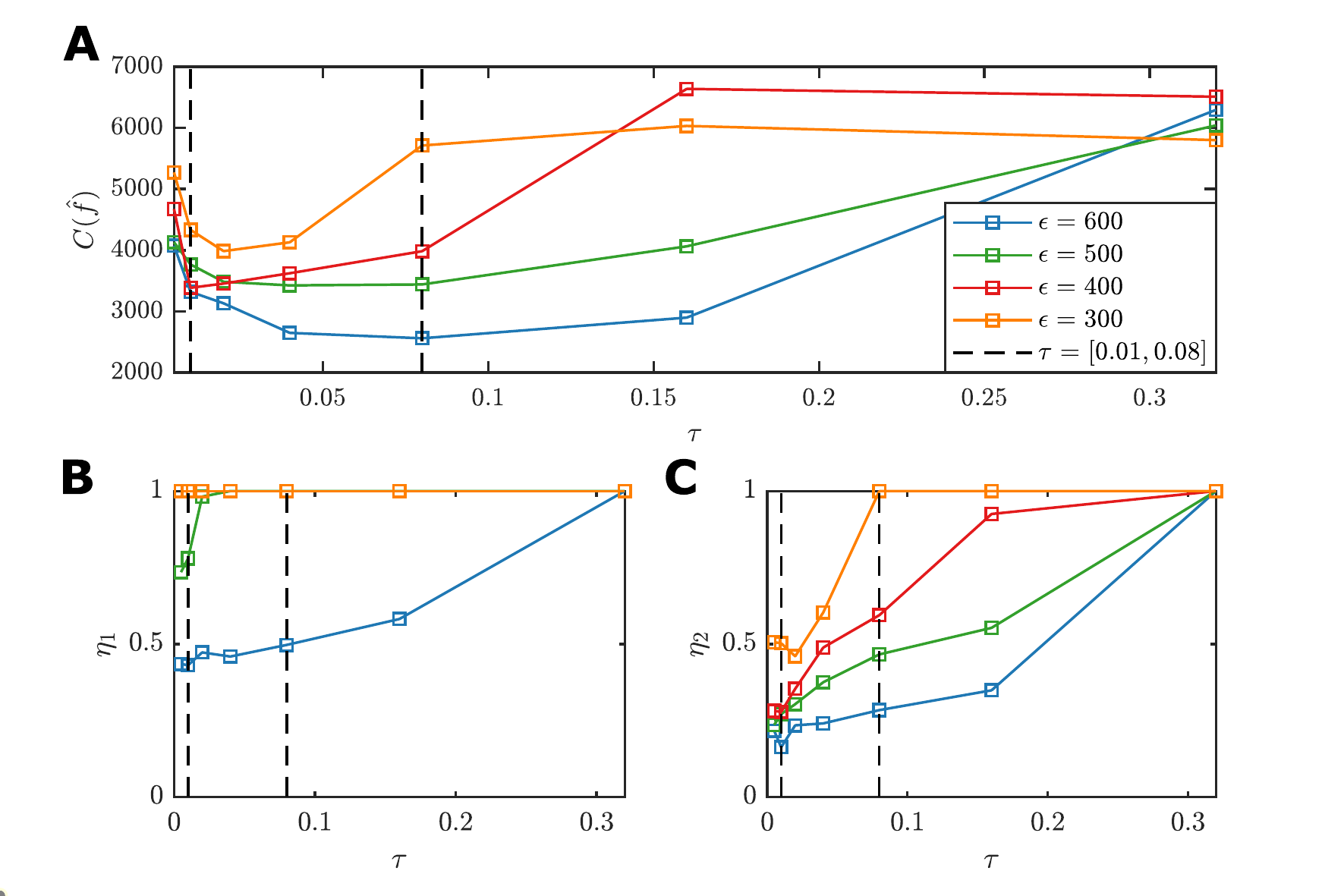}
	\caption{The relationship between the approximate stochastic simulation time-step, $\tau$, and: (A) the cost of generating $N = 10000$ weighted samples; (B) the continuation probability when an approximate stochastic simulation is accepted; and (C) the continuation probability when an approximate stochastic simulation is rejected. Solid lines indicate the effect of $\tau$ for different discrepancy thresholds $\epsilon$ and the dashed black lines indicate the range of values for $\tau$ that demonstrate performance improvement.}
	\label{fig:amftautune}
\end{figure}

Using the above heuristics we arrive at the choice of $L = 5 $ and $\tau = 0.04$. We apply our new MF-MLMC-ABC method (Algorithm~\ref{alg:mfmlmcabc}) to the ABC inference problem in \eqref{eq:repABC} and compare with MLMC-ABC (Algorithm~\ref{alg:mlmcabc}), MF-ABC (Algorithm~\ref{alg:mfabc}), and ABC rejection sampling (Algorithm~\ref{alg:ABC-rej})  for different values for the target discrepancy $\epsilon \in [350, 500]$, and in all cases $\epsilon_1 = 1600$. For each target, we perform optimal tuning steps for MF-MLMC-ABC, MLMC-ABC and MF-ABC to adapt the continuation probabilities $\{(\eta_{\ell,1},\eta_{\ell,1})\}_{\ell=1}^{\ell=L}$ and sample sizes $\{N_\ell\}_{\ell=1}^{\ell=L}$. We repeat this for different target variances, $h^2$, then estimate the variance, $\V{\hat{f}}$, and the computational cost, $C(\hat{f})$, to obtain an estimate of the convergence rate, $\gamma$, by least-squares fitting $\V{\hat{f}} \propto C(\hat{f})^{-\gamma}$ to align with theory from the MLMC literature~\cite{Giles2015,Giles2008}. Figure~\ref{fig:benchrep} demonstrates the substantial computational advantage of MF-MLMC-ABC. 

\begin{figure}[h]
	\centering
	\includegraphics[width=0.85\linewidth]{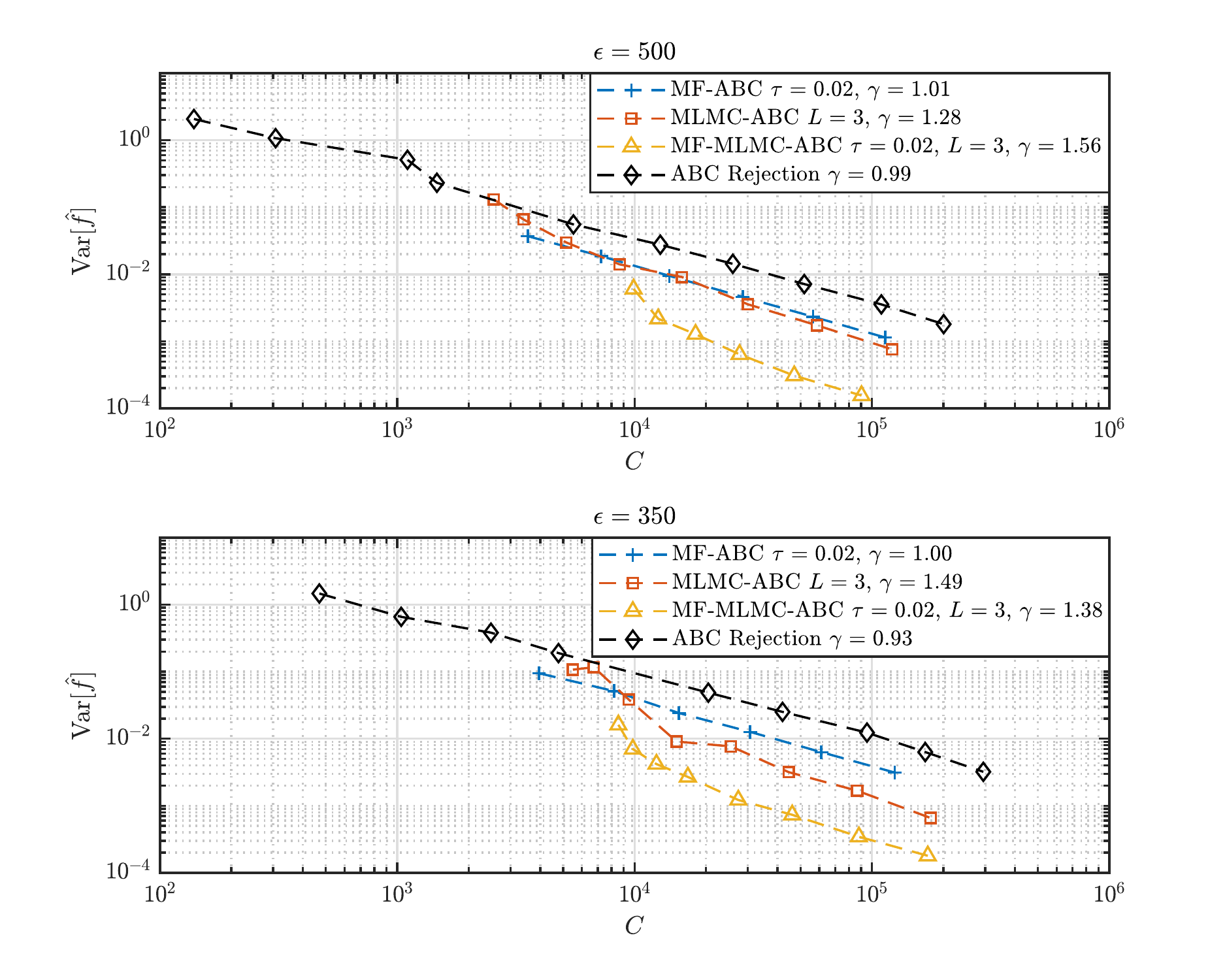}
	\caption{Comparison of convergence rates for MF-MLMC-ABC (yellow triangles) with MF-ABC (blue crosses), MLMC-ABC (red squares), and ABC rejection sampling (black diamonds) using the repressilator model with thresholds: (A) $\epsilon = 500$; and (B) $\epsilon = 350$. Rates are estimated by fitting $\V{\hat{f}} \propto C(\hat{f})^{-\gamma}$ to benchmark data using least squares.}
	\label{fig:benchrep}
\end{figure}

While MF-ABC is consistently computationally cheaper than ABC rejection sampling, the convergence rate is $\gamma \approx 1$ which is equivalent to the limiting behaviour of ABC rejection and the theoretical rate under the central limit theorem. MLMC-ABC achieves a higher convergence rate of around $\gamma \approx 1.5$. However, the computational benefit is not realised until the target variances are small due to the overhead of tuning for $\{N_\ell\}_{\ell=1}^{\ell=L}$ along with the exact stochastic simulations performed at every level. MF-MLMC-ABC out-performs the other methods in every target discrepancy and target variance, and succeeds in both improving the convergence rate and substantially reducing the overall computational cost, including the tuning steps. The convergence rate is improved in a similar way to MLMC-ABC with $\gamma \approx 1.5$, however, the overall computational reduction in MF-MLMC-ABC compared with MLMC-ABC is larger than the reduction in MF-ABC vs ABC rejection. This is largely due to the fact that MF-ABC is most effective for larger discrepancies, as noted by the tendency for optimal continuation probabilities to be smaller as the discrepancy threshold increases (Figure~\ref{fig:amftautune}(B)--(C)). Furthermore, MLMC will utilise variance reduction via the coupling in the telescoping summation to allocate fewer samples for the smaller discrepancies. As a result, the earlier levels benefit from both the higher acceptance rates that come with large discrepancy thresholds, and smaller continuation probabilities so exact stochastic simulations are rarely executed. The effect extends to the optimisation of $\{N_\ell\}_{\ell=1}^{\ell=L}$ through \eqref{eq:optNphi}. This results in a substantial reduction in the usual overheads associated with MLMC-ABC and computational benefit is realised for much smaller target variances. Consistently, for equivalent computational cost the MF-MLMC-ABC is between one and two orders of magnitude lower in terms of variance and the improvement increases for larger computation times due to the convergence rate. This example demonstrates practically how  MF-MLMC-ABC may be tuned without substantial overhead to provide a very high-performance inference method. We also explore the effect of different choices of $L$ and $\tau$ to demonstrate the efficiency of our heuristics (Supplementary Material).      

One final aspect of the MF-MLMC-ABC approach that is important to consider is that of the additional bias that can be incurred from either selecting a poor choice of $L$ (that is, $m$ is too large), or from the bias in the MF-ABC scheme for small $N_\ell$. To explore this we compare the expectations of the parameter $K$ across algorithms. Figure~\ref{fig:Repmfmlmcbias}, demonstrates that this bias, as expected, decays with computational effort. However, it is also important to note that the magnitude of the bias is small compared with that of the ABC approximation as additional experimentation shows that a smaller discrepancy of $\epsilon < 200$ leads to $\CondE{K}{\bvec{Y}_{\text{obs}}} \approx 18$. 

\begin{figure}[h]
	\centering
	\includegraphics[width=0.85\linewidth]{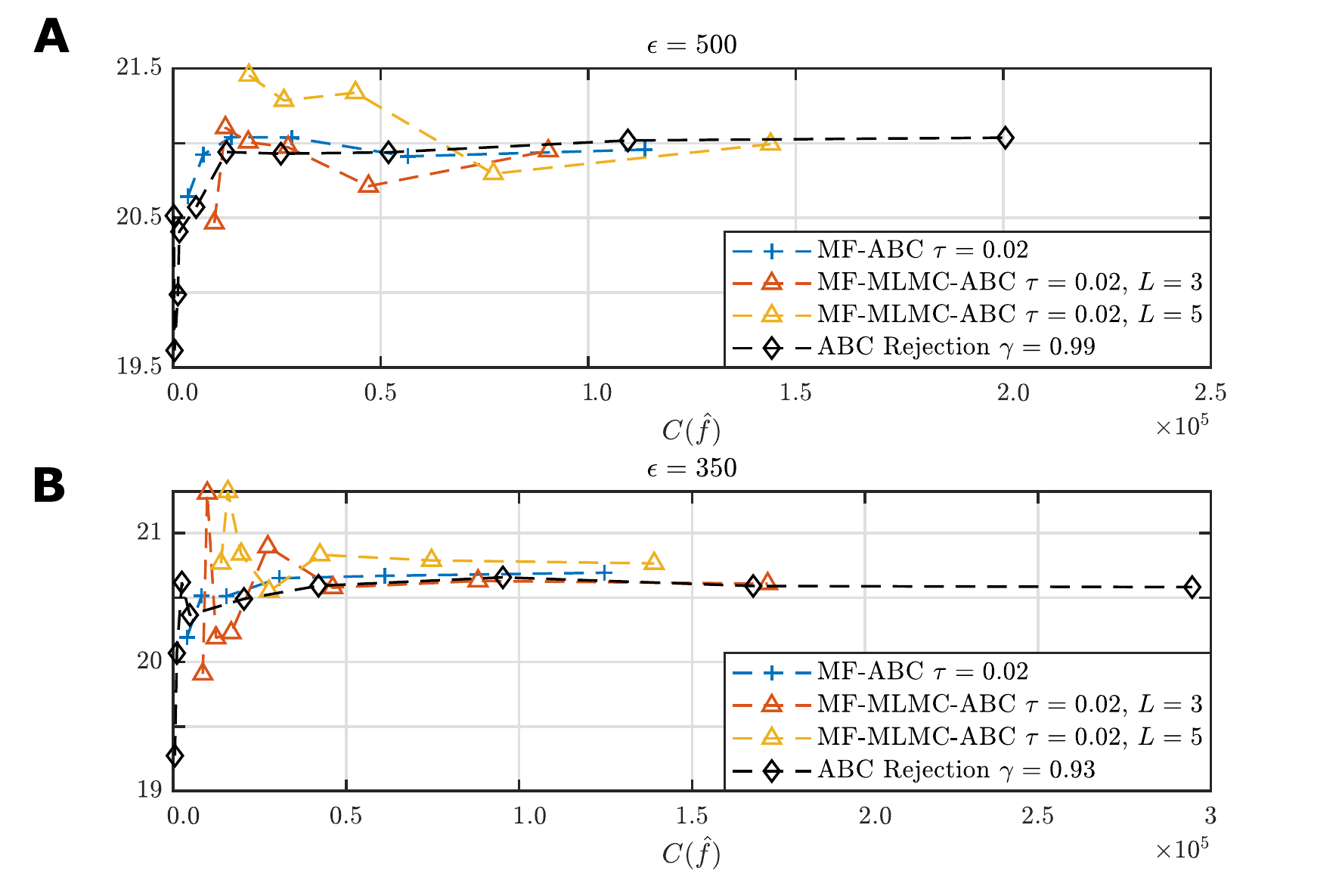}
	\caption{Comparison of expectations for MF-MLMC-ABC (yellow and red triangles) with MF-ABC (blue crosses), and ABC rejection sampling (black diamonds) using the repressilator model with thresholds: (A) $\epsilon = 500$; and (B) $\epsilon = 350$.}
	\label{fig:Repmfmlmcbias}
\end{figure}

\FloatBarrier
\subsection{A challenging problem: Two-step MAPK cascade reaction}

The last example we consider is a challenging biological network involving a two-step Mitogen Activated Protein Kinase (MAPK) enzymatic cascade~\cite{Dhananjaneyulu2012}. Such cascade reactions are essential components of cell signalling processes, such as Epidermal Growth Factor Receptor (EGFR) signalling, that regulates cell growth, death, proliferation, and differentiation in mammalian cells~\cite{Brown2004,Oda2005}. This two-step MAPK cascade model involves four coupled Michaelis--Menten components that govern the phosphorylation and dephosphorylation of two proteins $X$ and $Y$,   
\begin{equation}
\begin{split}
X + E \overset{k_1}{\rightarrow} [XE], \quad [XE] \overset{k_2}{\rightarrow} X + E, \quad [XE] \overset{k_3}{\rightarrow} X^*+ E, \\
X^* + P_1 \overset{k_4}{\rightarrow} [X^*P_1], \quad [X^*P_1] \overset{k_5}{\rightarrow} X^* + P_1, \quad [X^*P_1] \overset{k_6}{\rightarrow} X + P_1, \\
X^* + Y \overset{k_7}{\rightarrow} [X^*Y], \quad [X^*Y]\overset{k_8}{\rightarrow} X^* + Y, \quad [X^*Y] \overset{k_9}{\rightarrow} X^* + Y^*, \\
Y^* + P_2 \overset{k_{10}}{\rightarrow} [Y^*P_2], \quad [Y^*P_2] \overset{k_{11}}{\rightarrow} Y^* + P_2, \quad [Y^*P_2] \overset{k_{12}}{\rightarrow} Y + P_2, \\
\end{split}
\label{eq:step2MAPK}
\end{equation}
where $k_1, k_2, \ldots, k_{12}$ are kinetic rate parameters, $X^*,Y^*$ are the activated (phosphorylated) proteins, $E$ is the enzyme involved in the activation of the $X$ protein, and $P_1,P_2$ are phosphatase molecules that dephosphorylate $X^*, Y^*$. Finally note the two-step process where the activated $X^*$ protein acts as an enzyme in the activation of $Y$. 

In this case, we assume only activated proteins can be detected, therefore we consider the observation process
\begin{equation}
\bvec{y}_{\text{obs}}(t)  \sim \mathcal{N}([X_t^*,Y_t^*]^\text{T},\sigma^2\bvec{I}),
\label{eq:mapkobs}
\end{equation} 
where $\bvec{I}$ the $2 \times 2$ identity matrix. Discrete observations are taken at regular four time unit intervals. $t_i = 4i$, for $i = 0,1, \ldots 50$. Figure~\ref{fig:mapkdata}(A) shows an example realisation of the two-step MAPK cascade model along with discrete observations of the activated proteins.  Figures~\ref{fig:mapkdata}(B)--(D) provide additional detail and highlight the complex dynamics of the unobserved chemical species.  Given the very limited data in this realistic scenario, we do not have practical identifiability for all rate parameters. Based on the network structure, we only expect parameters $k_3$, $k_6$, $k_9$, and $k_{12}$ to be identifiable as these rates correspond to reactions that change the copy numbers of the observed variables.
\begin{figure}[h]
	\centering
	\includegraphics[width=0.9\linewidth]{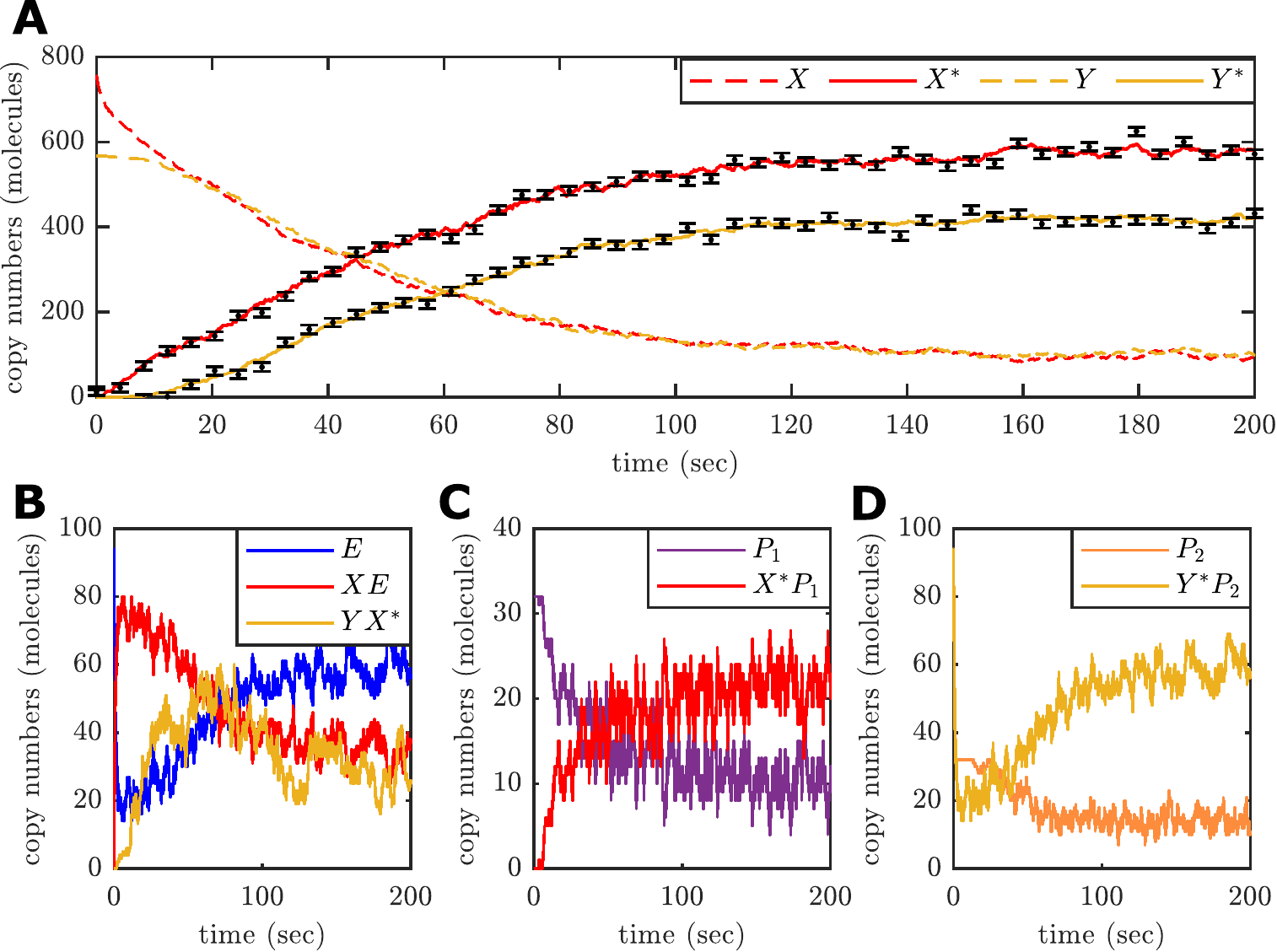}
	\caption{Example realisation of the two-step MAPK cascade reaction network model along with noisy observations of the two phosphorylated protein molecules, $\bvec{y}_{\text{obs}}(t) \sim \mathcal{N}([X^*_t,Y^*_t]^{\text{T}}, \sigma^2\bvec{I})$ (error bars indicate $\bvec{y}_{\text{obs}}(t) \pm \sigma$). Here, the initial condition is $E_{0} = 94$, $X_{0} = 757$, $Y = 567$, $P_1 = P_2 = 32$, and $X^* = Y^* = EX = X^*P_1 = YX^* = Y^*P_2$, and the true rate parameters are $k_1 = 0.001$, $k_2 = k_1/120$, $k_3 = 0.18$, $k_4 = 0.001$, $k_5 = k_4/22$, $k_6 = 0.3$, $k_7 = 0.0001$, $k_8 = k_7/110$, $k_9 = 0.2$, $k_{10} = 0.001$, $k_{11} = k_{10}/22$, and $k_{12} = 0.3$, and observations are taken at $t_i = 4i$, for $i = 1, 2,\ldots 50$ with standard deviation $\sigma = 10$.}
	\label{fig:mapkdata}
\end{figure}

The target ABC inference problem we consider is to estimate the reverse rate parameter of the dephosphorylation reaction for the deactivation of $Y^*$ and the marginal posterior distributions for the identifiable rate parameters
\begin{equation}
\begin{split}
\CondE{k_{11}}{\bvec{Y}_{\text{obs}}} &\approx \int_{\mathbb{R}^8} k_{11}\CondPDF{\paramvec}{\discrep{\bvec{Y}_{\text{obs}}}{\bvec{Y}_s} \leq \epsilon} \, \text{d}\paramvec, \\
\CondProb{k_i < s_i}{\bvec{Y}_{\text{obs}}} &\approx \int_{\mathbb{R}^8} \ind{(-\infty,s_i]}{k_i}\CondPDF{\paramvec}{\discrep{\bvec{Y}_{\text{obs}}}{\bvec{Y}_s} \leq \epsilon} \, \text{d}\paramvec, \quad i = 3,6,9,12
\end{split}
\label{eq:mapkABC}
\end{equation}
where $\paramvec = [k_2,k_3,k_5,k_6,k_8,k_9,k_{11},k_{12}]$ is the vector of unknown rate parameters, $\bvec{Y}_{\text{obs}} = [\bvec{y}_{\text{obs}}(t_0),\ldots, \bvec{y}_{\text{obs}}(t_{50})]$ are noisy observations of the activated protein copy numbers at discrete times (\eqref{eq:mapkobs}), $\bvec{Y}_s \sim \simProc{\cdot}{\paramvec}$ is simulated data of the two-step MAPK model using the Gillespie direct method and simulating the observation process (\eqref{eq:mapkobs}), $\epsilon$ is the discrepancy threshold, and the discrepancy metric is $\discrep{\bvec{Y}_{\text{obs}}}{\bvec{Y}_s} = \|\bvec{Y}_{\text{obs}} -\bvec{Y}_s\|_2$ where $\| \cdot\|_2$ is the Euclidean norm. Independent uniform priors are used with $k_2 \sim \mathcal{U}(0,k_1)$, $k_3 \sim \mathcal{U}(0,1)$, $k_5 \sim \mathcal{U}(0,k_4)$, $k_6 \sim \mathcal{U}(0,1)$, $k_8 \sim \mathcal{U}(0,k_7)$, $k_9 \sim \mathcal{U}(0,1)$, $k_{11} \sim \mathcal{U}(0,k_{10})$, $k_{12} \sim \mathcal{U}(0,1)$. We treat the rate parameters of the complex binding in all Michaelis--Menten reactions as known with $k_1 = k_4 = 0.001$, and $k_7 = k_{10}$.

We apply MF-MLMC-ABC to this problem with $L = 7$, $\tau = 0.5$, $\epsilon_1 = 1600$, $\epsilon_L = 300$, $\epsilon_\ell = \epsilon_{\ell-1}/m$ for all $\ell = 2, \ldots, L$ and $m \approx 1.32$, $\tilde{\epsilon}_\ell = \epsilon_\ell$ for $\ell = 1, \ldots,L$, $\approxdiscrept{\bvec{Y}_{\text{obs}}}{\bvec{Y}^\tau_s}{\tau} = \|\bvec{Y}_{\text{obs}} - \bvec{Y}^\tau_s\|_2$ with $\bvec{Y}_s^\tau \sim \approxsimProct{\cdot}{\paramvec}{\tau}$ the approximate stochastic simulation process using the tau-leaping method (Algorithm~\ref{alg:assa}), and $\tau_\ell = \tau$ for $\ell = 1,2, \ldots, L$. Just as with the repressilator, we observe significant improvements (Figure~\ref{fig:benchmapk}) against MLMC-ABC, MF-ABC, and ABC rejection sampling. We also show the estimated marginal posterior densities for the identifiable parameters, $k_3$, $k_6$ $k_9$, and $k_{12}$, in Figure~\ref{fig:mapkpost}.

\begin{figure}
	\centering
	\includegraphics[width=0.7\linewidth]{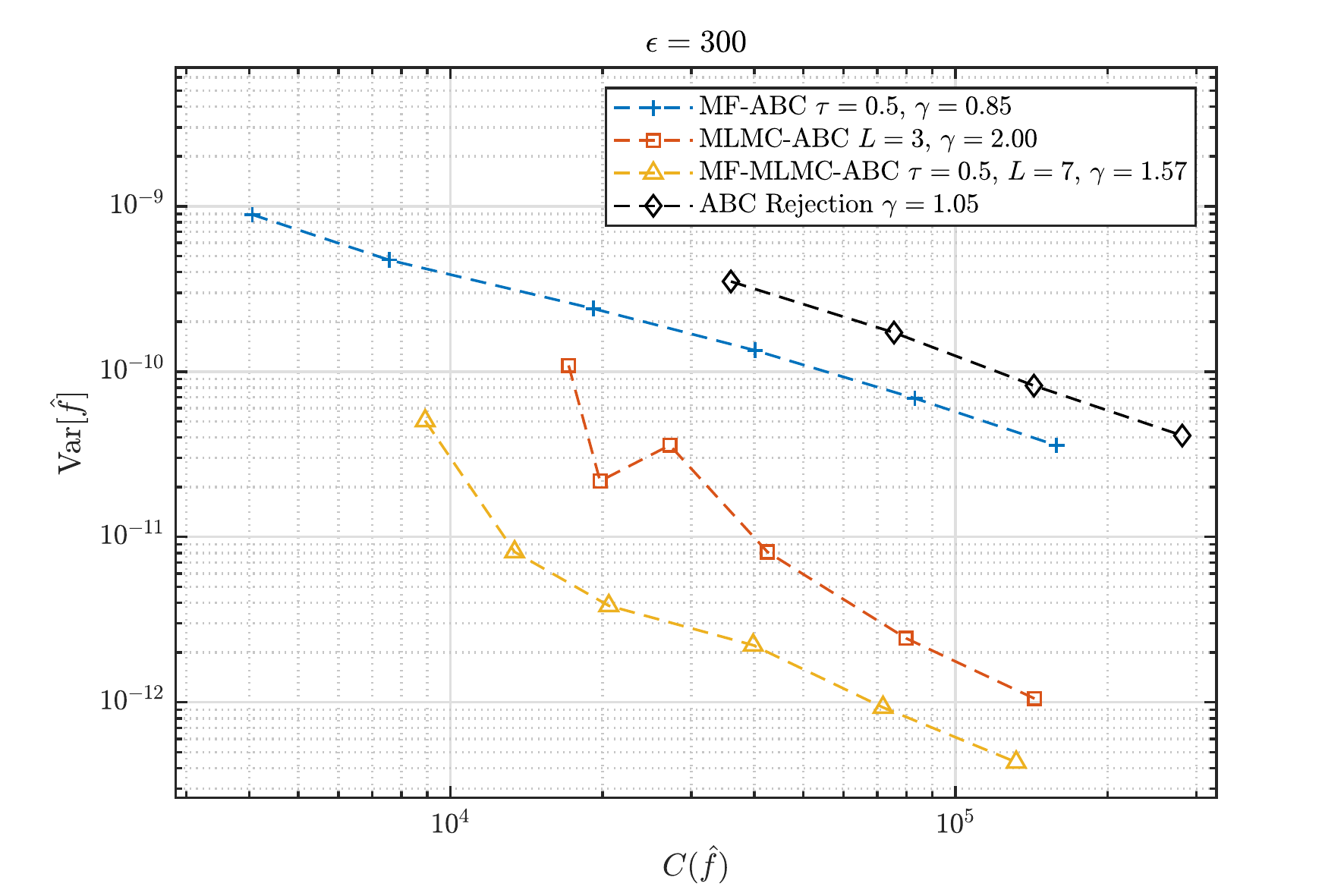}
	\caption{Comparison of convergence rates for MF-MLMC-ABC (yellow triangles) with MF-ABC (blue crosses), MLMC-ABC (red squares), and ABC rejection sampling (black diamonds) using the two-step MAPK model with threshold $\epsilon = 300$. Rates are estimated by fitting $\V{\hat{f}} \propto C(\hat{f})^{-\gamma}$ to benchmark data using least squares.}
	\label{fig:benchmapk}
\end{figure}

\begin{figure}
	\centering
	\includegraphics[width=0.8\linewidth]{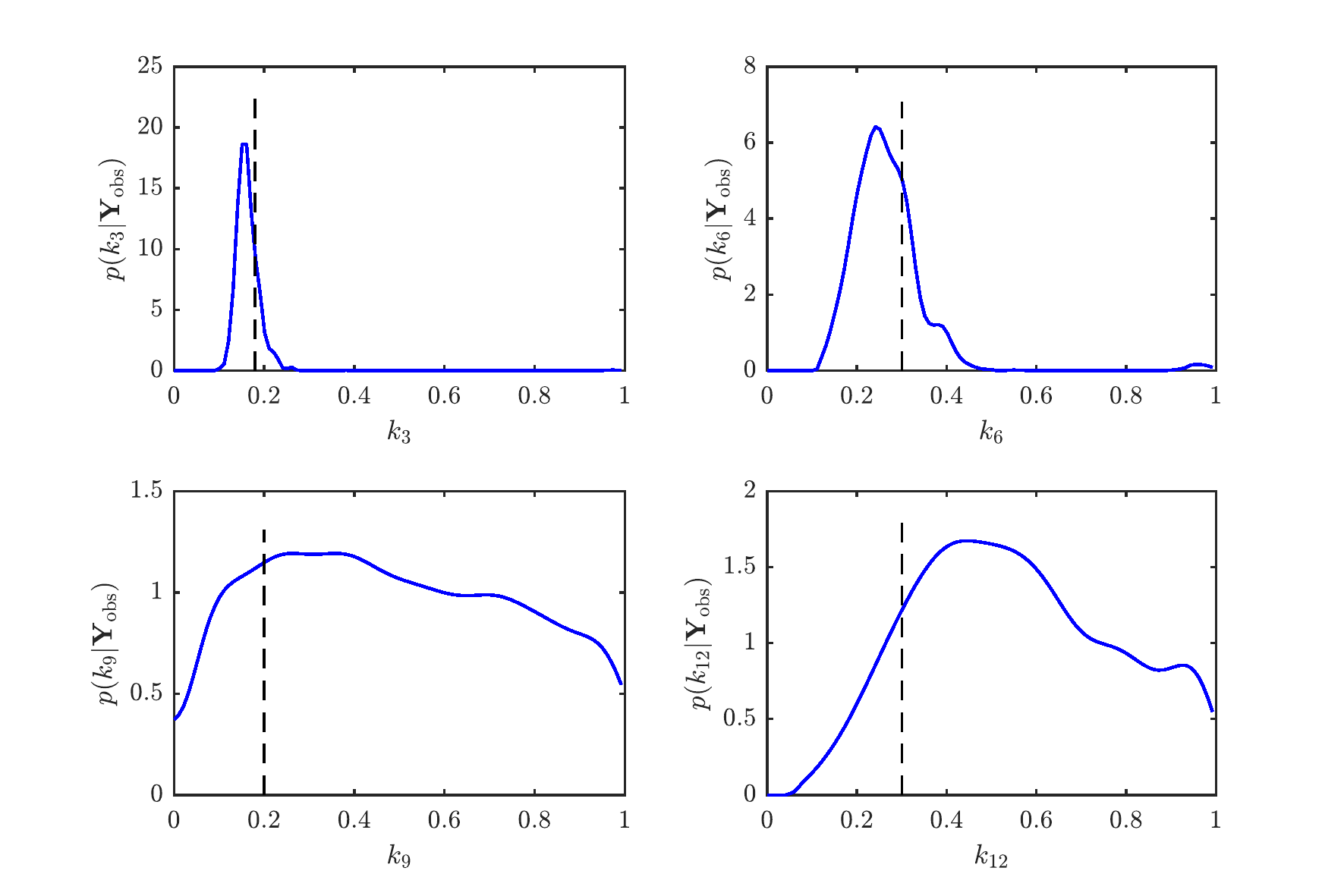}
	\caption{Marginal posterior probability density functions (blue lines) estimated by the for identifiable parameters from the two-stage MAPK cascade model. True parameter values are indicated (black dashed lines). Densities were estimated using the MF-MLMC-ABC method with $L = 7$, $\tau = 0.5$ and $\epsilon_1 = 1600$ and $\epsilon_7 = 300$.}
	\label{fig:mapkpost}
\end{figure}

In this case, MF-MLMC-ABC exceeds two orders of magnitude improvement over both MF-ABC and ABC rejection sampling. It is also clear that the MLMC-ABC approach is still very effective, with almost two orders of magnitude improvement over ABC rejection sampling. However, just as with the repressilator model, the MF-MLMC-ABC approach achieves the improved convergence rate of MLMC and reduced computation time for each term in the telescoping summation, notwithstanding a cost reduction in the MLMC tuning overhead.

\FloatBarrier

%\newpage

\section{Discussion}

In this work, we have introduced a new approach to ABC-based parameter inference for partially observed stochastic processes. Our approach combines the benefits of MLMC variance reduction techniques~\cite{Giles2008,Jasra2019,Warne2018} with the benefits of a multifidelity method for reducing stochastic simulation costs~\cite{Prescott2020,Prescott2021,Rhee2015}. We have developed the formulation by applying the multifidelity weighting scheme directly to the MLMC telescoping summation for ABC inference. Our practical implementation of the new algorithm demonstrates how various components of the algorithm are tuned and optimised, leading to the acceleration of ABC inference by two orders of magnitude  for realistic inference problems in systems biology.

These promising results open new avenues for accelerating various ABC-based schemes. While we focus on the acceleration of ABC rejection sampling, this is chosen as a first step toward other schemes. For example, Jasra et al.~\cite{Jasra2019} demonstrate an MLMC approach to SMC for ABC, and Prescott and Baker~\cite{Prescott2021} develop an SMC-based implementation of multifidelity ABC.  Both of these methods improve upon SMC for ABC and future work combining these methods could produce a multifidelity MLMC version of SMC. Further, ABC is not the only likelihood-free method that relies heavily on many stochastic simulations from the model, pseudo-marginal methods~\cite{Andrieu2009,Andrieu2010,Warne2020} and BSL~\cite{Price2017,Priddle2021} also use simulations to either estimate the likelihood function or construct a Gaussian approximation to the likelihood. There may be opportunities to combine MLMC and multifidelity methods for both these approaches to likelihood-free inference. For example, Jasra et al.~\cite{Jasra2018} consider a MLMC approach to particle MCMC and it may be possible to obtain additional benefit from multifidelity methods. 

We have also focussed our attention on ABC inference for partially observed discrete-state Markov processes since ABC methods are widely applied for such models in systems biology~\cite{Sunnaker2013,Toni2009,Liepe2014,Wu2014}, epidemiology \cite{Chinazzi2020,McKinley2018,Minter2019,Walker2019,Warne2020a}, ecology~\cite{Beaumont2010,Siren2018}, and physics~\cite{Akeret2015,BarajasSolano2019,Christopher2018}. Consequently, the high-fidelity and low-fidelity simulations within the MF-ABC framework are, respectively, assumed to be the natural choices of Gillespie's direct method and the tau-leaping method parameterised by $\tau$. However, our methodology supports many other variations, including differences in discrepancy metrics, summary statistics, and discrepancy thresholds, that could provide improvements in the ability for the low-fidelity simulation to predict the outcome of the high-fidelity simulation and thus reduce the optimal continuation probabilities. Furthermore, the approach is generally applicable to any ABC inference problem in which an appropriate approximate simulation scheme can be identified. For example, there is a wide range of potential choices for approximations that will lead to efficient multifidelity sampling~\cite{Peherstorfer_2018}, such as model reduction~\cite{Brown2004,Transtrum2014}, mean-field or linear approximations~\cite{Warne2019b,Schnoerr2017,Browning2020,Cao2018}, or surrogate models that act as emulators~\cite{Buzbas2015,Tripathy2018,Borowska2021}.  Finally, recent advances in deep learning provide potential for the automatic construction of surrogates~\cite{Lueckmann2017,Papamakarios2019,Cranmer2020}.

An additional novel modification we develop here, with further generalisations developed in Prescott et~al.~\cite{Prescott2021b}, is an adaptive scheme for iteratively updating the continuation probabilities (Supplementary Material). This approach is more robust than previous implementations~\cite{Prescott2020}, however, it relies upon a fixed choice of the time-step, $\tau$, in the approximation. While this scheme will identify poor choices in $\tau$ with $\eta_1,\eta_2 \to 1$, future work should investigate adaptive schemes to optimise $\tau$ along with $\eta_1,\eta_2$; this would lead to a near automatic tuning step for MF-MLMC-ABC. Furthermore, a similar adaptive optimisation approach could be used to adapt the optimal MLMC sample size sequence to reduce the MLMC tuning overhead. However, a more pressing issue with MLMC-based approaches to ABC is the assumption of a fixed sequence of discrepancy thresholds~\cite{Prescott2021,Jasra2019,Warne2018}, while the state-of-the-art in SMC-ABC is to adaptively select this sequence~\cite{Drovandi2011,Beaumont2009}. Such adaptive schemes may be possible for MLMC-based ABC especially if applied in an SMC setting. Finally, other variance reduction techniques, such as array randomised quasi-Monte Carlo methods for Markov chains, could be applied to accelerate estimation of the multifidelity expectations even further~\cite{LEcuyer2009,Puchhammer2021,Beentjes2018}.

Our implementations have been developed efficiently using a high-level programming environment using a single CPU core. While this is sufficient to demonstrate the computational benefits of our new algorithm, there are many additional optimisation techniques that could also be applied here. In particular, one substantial advantage of utilising MF-MLMC-ABC based upon rejection sampling, is that there are few synchronisation steps. Therefore, most of the stochastic simulations (especially the approximate simulations) can exploit many parallel computing architectures, such as general-purpose graphics processing units (GPGPUs)~\cite{Lee2010,Klingbeil2011,Hurn2016}, single instruction multiple data (SIMD) CPU processors~\cite{Warne2021,Mahani2015}, and recent advances in AI hardware~\cite{Kulkarni2020}. This leads to methods in which the statistical efficiency also directly scales to state-of-the-art massively parallel computing.

Finally, our numerical results demonstrate that high performance and high precision Bayesian inference can be performed for challenging partially observed stochastic processes such as those that arise in systems biology. We demonstrate this high performance for a large network with 12 parameters and 11 chemical species without any dimensionality reduction using summary statistics that is almost always required for ABC inference to be viable. With computational improvements of up to two orders of magnitude, our method is a significant advance in the use of approximations to accelerate ABC inference without incurring accuracy penalties. Furthermore, the success of the multifidelity and MLMC approach to inference will enable more realistic and complex models to be used to analyse modern, high resolution data.

\paragraph*{Acknowledgments} DJW thanks the Australian Mathematical Society for the Lift-off Fellowship. DJW and MJS acknowledge support from the Centre for Data Science at QUT, and the ARC Centre of excellence in Mathematical and Statistical Frontiers (ACEMS; CE140100049).   REB and TPP would like to thank BBSRC/UKRI for funding via grant number BB/R00816/1. REB is supported by a Royal Society Wolfson Research Merit Award. TPP is supported by Wave 1 of The UKRI Strategic Priorities Fund under the EPSRC Grant EP/W006022/1, particularly the ``Shocks and Resilience'' theme within that grant, and The Alan Turing Institute.  MJS is supported by the Australian Research Council (DP200100177). Computational resources were provided by the eResearch at QUT.

\paragraph*{Software Availability} Matlab source code with example implementations and demonstrations for all numerical examples presented in this work is available on  \href{https://github.com/davidwarne/MLMCandMultifidelityForABC}{GitHub}).

\begin{appendices}
	\appendix

	\section{Analysis of multifidelity rejection sampling}
	
	In this section, we present theoretical results for the bias and variance of the multifidelity ABC rejection sampler. We begin with a general importance sampler, then show that the multifidelity approach is a special case of importance sampling.
	\subsection{Importance sampling for approximate Bayesian computation}
	Let $\like{\dat}{\paramvec}$ be the true but intractable likelihood function and assume a likelihood-free approximation $\approxlike{\dat}{\paramvec}$. For approximate Bayesian Computation (ABC) methods we have the approximation to be chosen such that 
	\begin{equation*}
	\like{\dat}{\paramvec} \approx \alpha\approxlike{\dat}{\paramvec}= \CondE{\omega_\epsilon]}{\paramvec} = \int_{\mathbb{D}}\omega_\epsilon(\paramvec,\dat, \simdat)\simProc{\simdat}{\paramvec}\,\text{d}\simdat ,
	\end{equation*}
	where $\alpha > 0$ is a proportionality constant, $\omega_\epsilon(\paramvec,\dat, \simdat) = \ind{(-\infty, \epsilon]}{\discrep{\dat}{\simdat}}$ is the ABC accept/reject weighting function and $\simProc{\cdot}{\paramvec}$ is the model simulation process.
	
	Now consider an importance distribution $q(\paramvec) \geq 0$ with positive support that includes the positive support of the prior $\PDF{\paramvec}$. Suppose $N$ weighted samples generated by sampling $\paramvec^{1},\paramvec^{2}, \ldots,\paramvec^{N} \sim q(\paramvec)$, then evaluating the importance weights,
	\begin{equation*}
	w_i = \omega_i\frac{\PDF{\paramvec^i}}{q(\paramvec^i)},
	\end{equation*}
	where $\omega_i = \omega_\epsilon(\paramvec^i,\dat, \simdat^i)$ with $\simdat^i \sim \simProc{\cdot}{\paramvec^i}$. Since $\CondE{\omega_\epsilon}{\paramvec} = \alpha \approxlike{\dat}{\paramvec}$, we have,
	\begin{equation*}
	\CondE{w}{\paramvec} = \frac{\PDF{\paramvec}}{q(\paramvec)}\CondE{\omega_\epsilon}{\paramvec} = \alpha \frac{\PDF{\paramvec}}{q(\paramvec)}\approxlike{\dat}{\paramvec}.
	\end{equation*}
	This induces an approximate posterior distribution,
	\begin{equation*}
	\CondPDFsub{\paramvec}{\dat}{\epsilon} = \frac{1}{Z}\approxlike{\dat}{\paramvec}\PDF{\paramvec},
	\end{equation*}
	where $\E{w} = \E{\CondE{w}{\paramvec}} = \alpha Z$, with the outer expectation is taken with respect to the importance distribution $q(\cdot)$. Without loss of generality, take $\alpha = 1$.
	
	The weighted sample is then used to construct the weighted Monte Carlo estimate,
	\[
	\hat{f} = \frac{\sum_{i=1}^N w_i f(\paramvec_i)}{\sum_{j=1}^N w_j},
	\]
	of the posterior mean, 
	\begin{equation*}
	\bar{f} = \CondE{f(\paramvec)}{\dat} = \int_{\paramspace} f(\paramvec)\CondPDFsub{\paramvec}{\dat}{\epsilon}\, \mathrm{d}\paramvec,
	\end{equation*}
	for the arbitrary function $f: \paramspace \rightarrow \mathbb R$.
	The following results show that the estimate $\hat f$ is asymptotically unbiased and consistent estimator.
	Using the Delta method, we write the leading order behaviour of the bias and the variance of $\hat f$. These results apply independently of $\omega_\epsilon$.
	
	\begin{theorem}
		\label{thm:mse}
		We denote $\Delta(\paramvec) = f(\paramvec) - \bar f$ and $\hat \Delta = \hat f - \bar f$ as the recentred error. Then, to leading order, the bias of $\hat f$ and the MSE of $\hat f$ are given by
		\begin{align}
		\E{\hat \Delta} &= - \left[ \frac{\E{w^2 \Delta}}{\E{w}^2} \right] \frac{1}{N} + O(N^{-2}), \label{eq:bias} \\
		\E{\hat \Delta^2} &= \left[ \frac{\E{w^2 \Delta^2}}{\E{w}^2} \right] \frac{1}{N}  + O(N^{-2}), \label{eq:mse}
		\end{align}
		where expectations are taken over the importance distribution, $\paramvec \sim q(\cdot)$.
	\end{theorem}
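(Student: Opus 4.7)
The plan is to recognise $\hat f$ as a self-normalised importance-sampling estimator and apply a Delta-method (Taylor) expansion around the deterministic denominator. First I would write
\begin{equation*}
\hat f = \frac{A}{B}, \qquad A = \frac{1}{N}\sum_{i=1}^N w_i f(\paramvec^i), \qquad B = \frac{1}{N}\sum_{i=1}^N w_i,
\end{equation*}
and introduce the centred numerator $C = A - \bar f B = \tfrac{1}{N}\sum_i w_i \Delta(\paramvec^i)$, so that $\hat\Delta = C/B$. Using the identity $\CondE{w}{\paramvec} = \PDF{\paramvec}\approxlike{\dat}{\paramvec}/q(\paramvec)$ and the definition of $\CondPDFsub{\paramvec}{\dat}{\epsilon}$, a short calculation gives $\E{w} = Z$ and $\E{wf(\paramvec)} = Z\bar f$, hence $\E{C} = 0$ and $\E{B} = Z$. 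These two facts are the engine of the argument.

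Next I would Taylor expand $1/B$ around $Z$:
\begin{equation*}
\frac{1}{B} = \frac{1}{Z}\left(1 - \frac{B-Z}{Z} + \left(\frac{B-Z}{Z}\right)^2 - \cdots\right),
\end{equation*}
valid on the event $|B-Z|<Z$, which has probability $1-O(N^{-1})$ under the assumption that the $w_i$ possess enough moments. Multiplying by $C$ and taking expectations gives the two identities
\begin{align*}
\E{\hat\Delta} &= \frac{\E{C}}{Z} - \frac{\E{C(B-Z)}}{Z^2} + O(N^{-2}) = -\frac{\E{CB}}{Z^2} + O(N^{-2}), \\
\E{\hat\Delta^2} &= \frac{\E{C^2}}{Z^2} + O(N^{-2}),
\end{align*}
where I have used $\E{C}=0$ and absorbed higher-order Taylor terms into $O(N^{-2})$; the remainder bounds follow from standard moment computations for sums of i.i.d.\ variables.

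The last step is to evaluate $\E{CB}$ and $\E{C^2}$ by exploiting independence of the $N$ draws from $q$. Expanding the double sums,
\begin{equation*}
\E{CB} = \frac{1}{N^2}\sum_{i,j}\E{w_i w_j \Delta(\paramvec^i)},
\end{equation*}
the off-diagonal terms factorise into $\E{w\Delta}\E{w} = 0\cdot Z = 0$, leaving only the $N$ diagonal contributions $\E{w^2\Delta}$, so $\E{CB} = \E{w^2\Delta}/N$. The same argument, using that the off-diagonal terms factorise as $\E{w\Delta}^2 = 0$, yields $\E{C^2} = \E{w^2\Delta^2}/N$. Substituting these into the Taylor expansions and using $Z = \E{w}$ produces (\ref{eq:bias}) and (\ref{eq:mse}).

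The only genuinely subtle point is making the $O(N^{-2})$ remainders rigorous: one has to control moments of $1/B$ and truncate the Taylor series away from the exceptional event $\{B \le Z/2\}$, whose probability decays fast enough (by Chebyshev, since $\V{B} = O(N^{-1})$) that its contribution is absorbed into the $O(N^{-2})$ term. Everything else is a bookkeeping exercise on sums of i.i.d.\ products.
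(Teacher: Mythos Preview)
Your argument is correct and is essentially the paper's own Delta-method proof: both expand the ratio estimator to second order, use $\E{w\Delta}=0$ to kill the first-order and off-diagonal terms, and read off the diagonal contributions $\E{w^2\Delta}$ and $\E{w^2\Delta^2}$. The only cosmetic difference is that you pre-centre the numerator to $C=A-\bar f B$ and expand $1/B$ alone, whereas the paper keeps $R=\sum w_i f_i$ and $S=\sum w_i$ separate and does a bivariate Taylor expansion of $r/s-\bar f$ about $(\mu_R,\mu_S)$ before combining the covariances into $\C{R-\bar f S}{S}$; the computations and the level of rigour on the $O(N^{-2})$ remainder are otherwise identical.
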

	
	\begin{proof}
		Note that $\hat f = R / S$ for the random variables $R = \sum_{i=1}^N w_i f( \paramvec_i)$ and $S = \sum_{j=1}^N w_j$.
		We define the function $F_1(r,s) = r/s - \bar f$ and note that $\hat \Delta = F_1(R, S)$.
		The Delta method proceeds by taking the second-order Taylor expansion of $F_1(R,S)$ about $(\mu_R, \mu_S) = (\E{R}, \E{S})$.
		It is straightforward to show that
		\begin{align*}
		\mu_R &= N \E{wf} = N Z \bar f, \\
		\mu_S &= N \E{w} = N Z,
		\end{align*}
		and thus that the first possible non-zero terms in the expansion of $F_1$ are the second order terms,
		\begin{align*}
		\E{F_1(R,S)}
		&=  -\frac{1}{\mu_S^2}\E{(R-\mu_R)(S-\mu_S)} + \frac{1}{2} \E{((S-\mu_S)^2} \frac{2 \mu_R}{\mu_S^3} + O(N^{-2}) \\
		&= \frac{1}{N^2 Z^2} \left( \V{S} \bar f - \C{R}{S} \right) + O(N^{-2}) \\
		&= -\frac{1}{N^2 Z^2} \C{R - \bar f S}{S} + O(N^{-2}).
		\end{align*}
		
		Expanding $R$ and $S$ as finite sums, we have
		\begin{align*}
		\E{\hat \Delta} &= -\frac{1}{N^2 Z^2} \C{\sum_{i=1}^N w_i(f_i - \bar f)}{\sum_{j=1}^N w_j} + O(N^{-2}) \\
		&= -\frac{1}{N Z^2} \C{w \Delta}{w} + O(N^{-2}) \\
		&= -\frac{1}{N Z^2} \left( \E{w^2 \Delta} - \E{w} \E{w \Delta} \right) + O(N^{-2}).
		\end{align*}
		The first result follows from again observing that $\E{w} = Z$ and $\E{w \Delta} = \E{wf} - \E{w}\bar f = 0$. A very similar argument applies to expanding $F_2(R,S) = F_1(R,S)^2$ about $(\mu_R, \mu_S)$, in order to find $\E{\hat \Delta^2} = \E{F_2(R,S)}$ to leading order.
	\end{proof}
	
	The key observation about \Cref{eq:mse,eq:bias} is that they apply to any choice of weighting, so long as $\CondE{w}{\paramvec} = \PDF{\paramvec}\approxlike{\dat}{\paramvec} / q( \paramvec)$. Finally, note that if $q(\cdot) = \PDF{\cdot}$ then we obtain ABC rejection sampling.

	\subsection{Multifidelity ABC rejection sampling}
	We now have the weight $w_i =\PDF{\paramvec^i}\omega_i / q(\paramvec^i)$ defined by the multifidelity ABC likelihood estimation, given by
	\[
	\omega_i = \ind{(-\infty,\tilde{\epsilon}]}{\approxdiscrep{\approxsimdat^i}{\dat}} + \frac{\ind{(-\infty,\eta(\approxsimdat^i)]}{U}}{\eta(\approxsimdat^i)} \left(\ind{(-\infty,\epsilon]}{\discrep{\simdat^i}{\dat}} - \ind{(-\infty,\tilde{\epsilon}]}{\approxdiscrep{\approxsimdat^i}{\dat}} \right),
	\]
	where $U$ is a unit-uniform random variable and the function $\eta(\approxsimdat))$ is the piecewise-continuous function
	\[
	\eta(\approxsimdat) = \eta_1\ind{(-\infty,\tilde{\epsilon}]}{\approxdiscrep{\approxsimdat}{\dat}} + \eta_2\ind{(\tilde{\epsilon},\infty]}{\approxdiscrep{\approxsimdat}{\dat}},
	\]
	parametrised by $\eta_1, \eta_2 \in (0,1]$.
	We have $\CondE{\omega}{\paramvec} = \CondProb{\discrep{\simdat}{\dat} \leq \epsilon}{\paramvec} = \approxlike{\dat}{\paramvec}$, and so Monte Carlo estimators using this weighting asymptotically agree with the estimates produced using standard ABC.
	Previous results have shown that in trading off MSE against computational budget, we can achieve smaller MSE for the same budget~\cite{Prescott2020}.
	
	The leading-order bias coefficients in the bias and MSE are determined by
	\begin{align*}
	\E{w^2 \Delta} &=  \int_{\paramspace} \Delta( \paramvec) \left( \frac{\PDF{\paramvec}}{q( \paramvec)} \right)^2 \CondE{\omega^2}{\paramvec} q( \paramvec)\, \mathrm d \paramvec, \\
	\E{w^2 \Delta^2} &=  \int_{\paramspace} \Delta( \paramvec)^2 \left( \frac{\PDF{\paramvec}}{q( \paramvec)} \right)^2 \CondE{\omega^2}{\paramvec} q( \paramvec)\, \mathrm d \paramvec,
	\end{align*}
	both of which rely on the conditional expectation $\CondE{\omega^2}{\paramvec}$.
	Expanding $\omega^2$ for the multifidelity case, we write
	\begin{align*}
	\omega^2
	&=
	\ind{(-\infty,\tilde{\epsilon}]}{\approxdiscrep{\approxsimdat}{\dat}} -
	2 \frac{\ind{(-\infty,\eta(\approxsimdat)]}{U}}{\eta(\approxsimdat)} \ind{(\epsilon,\infty]}{\discrep{\simdat}{\dat}} \ind{(-\infty,\tilde{\epsilon}]}{\approxdiscrep{\approxsimdat}{\dat}} \\
	&+ \frac{\ind{(-\infty,\eta(\approxsimdat)]}{U}}{\eta(\approxsimdat)^2}\left[ \ind{(-\infty,\epsilon]}{\discrep{\simdat}{\dat}} \ind{(\epsilon,\infty]}{\discrep{\approxsimdat}{\dat}} + \ind{(\epsilon,\infty]}{\discrep{\simdat}{\dat}} \ind{(-\infty,\epsilon]}{\discrep{\approxsimdat}{\dat}}\right].
	\end{align*}
	We take expectations with respect to $U$ to write
	\begin{align*}
	\CondE{\omega^2}{\paramvec,\approxsimdat,\simdat} 
	&=)
	\ind{(-\infty,\epsilon]}{\discrep{\simdat}{\dat}} \ind{(-\infty,\tilde{\epsilon}]}{\approxdiscrep{\approxsimdat}{\dat}} -\ind{(\epsilon,\infty]}{\discrep{\simdat}{\dat}} \ind{(-\infty,\tilde{\epsilon}]}{\approxdiscrep{\approxsimdat}{\dat}} \\
	&+ \frac{1}{\eta(\approxsimdat)} \left[ \ind{(-\infty,\epsilon]}{\discrep{\simdat}{\dat}} \ind{(\epsilon,\infty]}{\discrep{\approxsimdat}{\dat}} + \ind{(\epsilon,\infty]}{\discrep{\simdat}{\dat}} \ind{(-\infty,\epsilon]}{\discrep{\approxsimdat}{\dat}}\right].
	\end{align*}
	Finally, taking expectations with respect to the pair $(\approxsimdat,\simdat)$, we have 
	\[
	\CondE{\omega^2}{\paramvec} = p_{\tp}( \paramvec) - p_{\fp}( \paramvec) + \frac{1}{\eta_1} p_{\fp}( \paramvec) + \frac{1}{\eta_2} p_{\fn}( \paramvec),
	\]
	for the true positive, false positive, and false negative probabilities
	\begin{align*}
	p_\tp( \paramvec) &= \CondProb{\approxdiscrep{\approxsimdat}{\dat} \leq \tilde{\epsilon},~\discrep{\simdat}{\dat} \leq \epsilon}{\paramvec}, \\
	p_\fp( \paramvec) &= \CondProb{\approxdiscrep{\approxsimdat}{\dat} \leq \tilde{\epsilon},~\discrep{\simdat}{\dat} > \epsilon}{\paramvec}, \\
	p_\fn( \paramvec) &= \CondProb{\approxdiscrep{\approxsimdat}{\dat} > \tilde{\epsilon},~\discrep{\simdat}{\dat} \leq \epsilon}{\paramvec}.
	\end{align*}
	Noting that $\approxlike{\dat}{\paramvec} = \CondProb{\discrep{\simdat}{\dat} \leq \epsilon}{\paramvec} = p_\tp( \paramvec) + p_\fn( \paramvec)$, this can be written as
	\begin{equation}
	\CondE{\omega^2}{\paramvec} = \approxlike{\dat}{\paramvec} 
	+ \left( \frac{1}{\eta_1} - 1 \right) p_\fp( \paramvec) 
	+ \left( \frac{1}{\eta_2} - 1 \right) p_\fn( \paramvec),
	\end{equation}
	and thus we can relate the bias and MSE of MF-ABC to the bias and MSE of standard ABC.
	
	\subsubsection{Multifidelity bias}
	We let the subscript $\cdot_\abc$ denote the output of standard ABC, while $\cdot_\mfabc$ denotes the output of MF-ABC.
	Then the bias of MF-ABC is determined by
	\begin{align*}
	\E{w_\mfabc^2 \Delta} = \E{w_\abc^2 \Delta}
	&+ \frac{1-\eta_1}{\eta_1} \int_{\paramspace} \Delta( \paramvec) \frac{\PDF{\paramvec}}{q( \paramvec)} p_\fp( \paramvec) \PDF{\paramvec} \mathrm d \paramvec \\
	&+ \frac{1-\eta_2}{\eta_2} \int_{\paramspace} \Delta( \paramvec) \frac{\PDF{\paramvec}}{q( \paramvec)} p_\fn( \paramvec) \PDF{\paramvec} \mathrm d \paramvec,
	\end{align*}
	where we note that $\Delta$ may be positive or negative, so that we cannot infer any inequalities from this expression.
	
	Indeed, the difference in bias between MF-ABC and standard ABC is then given by
	\[
	\E{\hat \Delta_\mfabc} = \E{\hat \Delta_\abc} - \frac{1}{N Z^2} \left[ \frac{1 - \eta_1}{\eta_1} B_\fp + \frac{1 - \eta_2}{\eta_2} B_\fn \right] + O(N^{-2})
	\]
	for the expectations based on false positives,
	\begin{align*}
	B_\fp
	&= \int_{\paramspace} \Delta( \paramvec) \left( \frac{\PDF{\paramvec}}{q( \paramvec)} \right)^2 p_\fp( \paramvec) q( \paramvec) \mathrm d \paramvec = \int_{\paramspace} \Delta( \paramvec) \frac{\PDF{\paramvec}}{q( \paramvec)} p_\fp( \paramvec) \PDF{\paramvec} \mathrm d \paramvec
	\end{align*}
	and false negatives,
	\begin{align*}
	B_\fn
	&= \int_{\paramspace} \Delta( \paramvec) \left( \frac{\PDF{\paramvec}}{q( \paramvec)} \right)^2 p_\fn( \paramvec) q( \paramvec) \mathrm d \paramvec = \int_{\paramspace} \Delta( \paramvec) \frac{\PDF{\paramvec}}{q( \paramvec)} p_\fn( \paramvec) \PDF{\paramvec} \mathrm d \paramvec
	\end{align*}
	
	\subsubsection{Multifidelity MSE}
	Similarly, we can write down the MSE of MF-ABC in terms of the MSE of ABC, since
	\begin{align*}
	\E{w_\mfabc^2 \Delta^2} = \E{w_\abc^2 \Delta^2} 
	&+ \frac{1-\eta_1}{\eta_1} \int_{\paramspace} \left( \Delta( \paramvec) \frac{\PDF{\paramvec}}{q( \paramvec)} \right)^2 p_\fp( \paramvec) q( \paramvec) \mathrm d \paramvec \\
	&+ \frac{1-\eta_2}{\eta_2} \int_{\paramspace} \left( \Delta( \paramvec) \frac{\PDF{\paramvec}}{q( \paramvec)} \right)^2 p_\fn( \paramvec) q( \paramvec) \mathrm d \paramvec.
	\end{align*}
	It follows that the MSE can be written as
	\[
	\E{\hat \Delta_\mfabc^2} = \E{\hat \Delta_\abc^2}+ \frac{1}{NZ^2} \left[ \frac{1 - \eta_1}{\eta_1} V_\fp + \frac{1-\eta_2}{\eta_2} V_\fn \right] + O(N^{-2})
	\]
	for the expectations based on false positives,
	\begin{align*}
	V_\fp &= \int_{\paramspace} \left( \Delta( \paramvec) \frac{\PDF{\paramvec}}{q( \paramvec)} \right)^2 p_\fp( \paramvec) q( \paramvec) \mathrm d \paramvec = \int_{\paramspace} \Delta( \paramvec)^2 \frac{\PDF{\paramvec}}{q( \paramvec)} p_\fp( \paramvec) \PDF{\paramvec} \mathrm d \paramvec,
	\end{align*}
	and false negatives,
	\begin{align*}
	V_\fn &= \int_{\paramspace} \left( \Delta( \paramvec) \frac{\PDF{\paramvec}}{q( \paramvec)} \right)^2 p_\fn( \paramvec) q( \paramvec) \mathrm d \paramvec = \int_{\paramspace} \Delta( \paramvec)^2 \frac{\PDF{\paramvec}}{q( \paramvec)} p_\fn( \paramvec) \PDF{\paramvec} \mathrm d \paramvec.
	\end{align*}
	
	\subsubsection{Optimal $\eta_1$ and $\eta_2$}
	.
	In the end, we want to minimise the product of MSE with simulation cost, which is equivalent to minimising the product of
	\[
	Z \CondE{\frac{\pi}{q} \Delta^2}{\dat} + \frac{1-\eta_1}{\eta_1} V_\fp + \frac{1-\eta_2}{\eta_2} V_\fn
	\]
	with the expected simulation cost,
	\[
	\E{\tilde C} + \eta_1 \CondE{C}{\approxdiscrep{\approxsimdat}{\dat} \leq \tilde{\epsilon}} \Prob{\approxdiscrep{\approxsimdat}{\dat} \leq \tilde{\epsilon}} + \eta_2 \CondE{C}{\approxdiscrep{\approxsimdat}{\dat} > \tilde{\epsilon}}\Prob{\approxdiscrep{\approxsimdat}{\dat} > \tilde{\epsilon}}.
	\]
	This leads to Equation (18) in the main manuscript and \eqref{eq:phi} in Section~\ref{sec:adaptMF} .

	\section{Adaptive multifidelity ABC rejection sampling}
	\numberwithin{equation}{section}
	\numberwithin{figure}{section}
	\numberwithin{algorithm}{section}
	\numberwithin{table}{section}
	\setcounter{equation}{0}
	\label{sec:adaptMF}
	
	For simplicity, the main text presents MF-ABC and MF-MLMC-ABC based upon direct multifidelity ABC rejection where continuation probabilities must be pre-specified. Prescott and Baker~\cite{Prescott2020} present an adaptive scheme that explicitly update these probabilities using Equations (19)--(21) in the main text, however, this approach is extremely sensitive to poor initial estimates of the expectations in Equation (21) (main text). Here we utilise an adaptive scheme using exponentiated gradient descent to guide the continuation probabilities toward the optimum. This approach is used in our numerical experiments for the MF-ABC and MF-MLMC-ABC cases.
	
	Firstly, after $M < N$ iterations we compute the following estimates for the expectations in Equation (21) of the main text, 
	\begin{equation}
	\begin{split}
	\hat{\mu} = \sum_{i=1}^M W_i f(\paramvec_i) /\sum_{i=1}^M W_i,&\quad \hat{p}_{tp} = \frac{\rho_m}{\rho_k}\frac{1}{k}\sum_{i \in I_k} (f(\paramvec_i) - \hat{\mu})^2 \tilde{w}_i w_i,  \\
	\hat{p}_{fp} = \frac{\rho_m}{\rho_k}\frac{1}{k}\sum_{i \in I_k} (f(\paramvec_i) - \hat{\mu})^2 \tilde{w}_i(1- w_i), \quad\text{and}&\quad \hat{p}_{fn} = \frac{1-\rho_m}{1-\rho_k}\frac{1}{k}\sum_{i \in I_k} (f(\paramvec_i) - \hat{\mu})^2 (1 - \tilde{w}_i) w_i, \\
	\end{split}
	\label{eq:roc}
	\end{equation}
	and,
	\begin{equation}
	\begin{split}
	\hat{c}^\tau &= \frac{1}{M}\sum_{i=1}^M c^\tau(\paramvec_i), \quad \hat{c}_p = \frac{\rho_m}{\rho_k}\frac{1}{k}\sum_{i \in I_k} c(\paramvec_i)\tilde{w}_i, \quad\text{and}\quad \hat{c}_{n} = \frac{1-\rho_m}{1-\rho_k}\frac{1}{k}\sum_{i \in I_k} c(\paramvec_i)(1 - \tilde{w}_i), \\
	\end{split}
	\label{eq:cost}
	\end{equation}
	where $W_i$,$\tilde{w}_i$,and $w_i$ are, respectively, the full multifidelity weights, the weight of the approximate simulation, and the weight of the exact simulation. Here, $I_k$ is the set of all samples for which an exact simulation has been performed, $k = |I_k|$, $\rho_m$ is the acceptance rate of the approximate simulation and $\rho_k$ is the acceptance rate of the approximate simulation conditional on an exact simulation is generated.
	
	The function to optimise becomes,
	\begin{equation}	
	\phi(\eta_1,\eta_2 ; f) = \left(R_0 + \frac{p_{fp}}{\eta_1}+ \frac{p_{fn}}{\eta_2}\right)\left(c^\tau + \eta_1 c_p + \eta_2 c_n\right).\label{eq:phi}
	\end{equation}
	This leads to the following partial derivatives in the continuation probabilities,
	\begin{equation}
	\begin{split}
	\pdydx{\phi}{\eta_1} &= \left(\frac{1}{\eta_1} +R_0\right) c_p - (c^\tau + \eta_2c_n)\frac{p_{fp}}{\eta_1^2}, \\
	\pdydx{\phi}{\eta_2} &= \left(\frac{1}{\eta_2} +R_0\right) c_n - (c^\tau + \eta_1c_p)\frac{p_{fn}}{\eta_2^2}. \\
	\end{split}
	\label{eq:grad}
	\end{equation}
	
	We then apply exponentiated gradient descent with a learning rate $\delta = 0.1/[(c^\tau+c_p+c_n)\hat{\mu}^2]$. The result is provided in Algorithm~\ref{alg:adgradmfabc}. We also note that this method has been further generalised to other multifidelity schemes by Prescott, Warne and Baker~\cite{Prescott2021b}.
	
	\begin{algorithm}[h]
		\caption{Adaptive Gradient Multifidelity ABC rejection sampling}
		\begin{algorithmic}[1]
			\State Initialise $M < N$, $\epsilon, \tilde{\epsilon}$, $\discrep{\bvec{Y}_{\text{obs}}}{\cdot}$, $\approxdiscrep{\bvec{Y}_{\text{obs}}}{\cdot}$ and prior $p(\paramvec)$;
			\State{Set $\eta_1 \leftarrow 1$ and$ \eta_2 \leftarrow 1$ and $I_k \leftarrow \emptyset$}
			\For {$i = 1,2,\ldots,N$}
			\State{Sample the prior $\paramvec^i \sim p(\paramvec)$};
			\State{Simulate low-fidelity model $\tilde{\bvec{Y}_s} \sim \approxsimProc{\cdot}{\paramvec_i}$};
			\State Set $\tilde{w}_i \leftarrow \ind{(0,\tilde{\epsilon}]}{\approxdiscrep{\bvec{Y}_{\text{obs}}}{\tilde{\bvec{Y}}_s}}$ and $\eta \leftarrow \eta_1 \tilde{w}_i + \eta_2 (1 - \tilde{w}_i)$;
			\If {$U < \eta$ where $U \sim \mathcal{U}(0,1)$}
			\State{Simulate high-fidelity model $\bvec{Y}_s \sim \simProc{\cdot}{\paramvec_i}$};
			\State Set $W_i \leftarrow \tilde{w}_i + (\ind{(0,\epsilon]}{\discrep{\bvec{Y}_{\text{obs}}}{\bvec{Y}_s}} - \tilde{w}_i)/\eta$;
			\State{Set $I_k \leftarrow I_k \cup \{i\}$}
			\Else
			\State Set $W_i \leftarrow \tilde{w}_i$;
			\EndIf
			
			\If {$i > M$}
			\State Compute $\hat{p}_{tp}$, $\hat{p}_{fp}$, $\hat{p}_{fn}$, $\hat{c}^\tau(\paramvec)$, $\hat{c}_p$, $\hat{c}_n$,$\pdydx{\phi}{\eta_1}$ and $\pdydx{\phi}{\eta_2}$ according to Equations (\ref{eq:roc})--(\ref{eq:grad})
			\State{Update $\eta_1 \leftarrow \min\left\{1,\eta_1\exp\left(-\delta \eta_1 \pdydx{\phi}{\eta_1}\right)\right\}$ and $
				\eta_2 \leftarrow \min\left\{1,\eta_2\exp\left(-\delta \eta_2 \pdydx{\phi}{\eta_2}\right)\right\}$}
			\EndIf
			\EndFor
			\State Set $\hat{f} \leftarrow \sum_{i=1}^N W_i f(\paramvec^i)/\sum_{i=1}^N W_i$.
		\end{algorithmic}
		\label{alg:adgradmfabc}
	\end{algorithm}
	
	\FloatBarrier
	%\section{Optimal sample sizes scaling}

	\begin{landscape}
		\section{Additional results}
		
		\begin{figure}[h]
			\centering
			\includegraphics[width=0.7\linewidth]{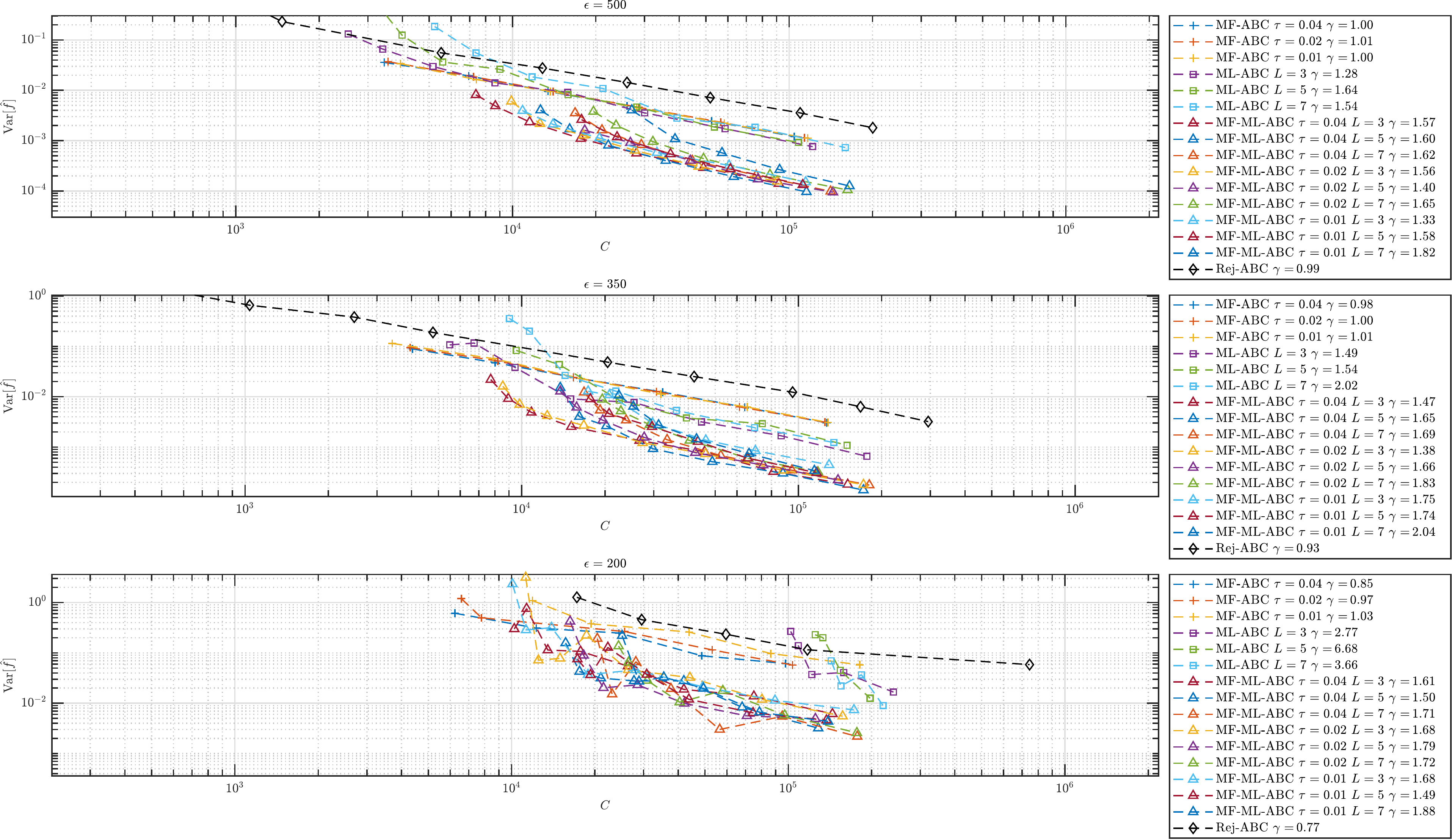}
			\caption{Comparison of convergence rates for a range of configurations MF-MLMC-ABC (triangles) with MF-ABC (crosses), MLMC-ABC (squares), and ABC rejection sampling (black diamonds) using the repressilator model with thresholds: (A) $\epsilon = 500$; (B) $\epsilon = 350$; and (C) $\epsilon = 200$. Rates are estimated by fitting $\V{\hat{f}} \propto C(\hat{f})^{-\gamma}$ to benchmark data using least squares.}
		
		\end{figure}
	\end{landscape}

\end{appendices}

\end{document}